\newcommand{\sigmastar}{\ensuremath{\Sigma^\ast}}
\newcommand{\singlespacing}{\let\CS=
\@currsize\renewcommand{\baselinestretch}{1}\tiny\CS}
\newcommand{\singlespacingplus}{\let\CS=
\@currsize\renewcommand{\baselinestretch}{1.25}\tiny\CS}
\newcommand{\doublespacing}{\let\CS=
\@currsize\renewcommand{\baselinestretch}{1.75}\tiny\CS}
\newcommand{\extradoublespacing}{\let\CS=
\@currsize\renewcommand{\baselinestretch}{1.9}\tiny\CS}
\newcommand{\nicenicespacing}{\let\CS=
\@currsize\renewcommand{\baselinestretch}{1.9}\tiny\CS}
\newcommand{\draftspacing}{\let\CS=
\@currsize\renewcommand{\baselinestretch}{2.0}\tiny\CS}
\newcommand{\hugedraftspacing}{\let\CS=
\@currsize\renewcommand{\baselinestretch}{2.4}\tiny\CS}
\newcommand{\niceonespacing}{\let\CS=\@currsize\renewcommand{\baselinestretch}{1.1}\tiny\CS}
\newcommand{\nicetwospacing}{\let\CS=\@currsize\renewcommand{\baselinestretch}{1.2}\tiny\CS}
\newcommand{\nicethreespacing}{\let\CS=\@currsize\renewcommand{\baselinestretch}{1.3}\tiny\CS}
\newcommand{\singlespacingplusplus}{\let\CS=\@currsize\renewcommand{\baselinestretch}{1.35}\tiny\CS}
\newcommand{\nicefourspacing}{\let\CS=\@currsize\renewcommand{\baselinestretch}{1.4}\tiny\CS}
\newcommand{\nicefivespacing}{\let\CS=\@currsize\renewcommand{\baselinestretch}{1.5}\tiny\CS}
\newcommand{\nicesixspacing}{\let\CS=\@currsize\renewcommand{\baselinestretch}{1.6}\tiny\CS}
\newcommand{\nicesevenspacing}{\let\CS=\@currsize\renewcommand{\baselinestretch}{1.7}\tiny\CS}
\newcommand{\niceeightspacing}{\let\CS=\@currsize\renewcommand{\baselinestretch}{1.8}\tiny\CS}
\newcommand{\niceninespacing}{\let\CS=\@currsize\renewcommand{\baselinestretch}{1.9}\tiny\CS}
\newcommand\qedblob{\mbox{\ding{113}}}
\newcommand{\onlinesystemb}[1]{\mathrm{online}\hbox{-}\allowbreak\mathrm{{#1}}\hbox{-}\allowbreak\mathrm{Bribery}}
\newcommand{\onlinesystembk}[2]{\mathrm{online}\hbox{-}\allowbreak\mathrm{{#1}}\hbox{-}\allowbreak\mathrm{Bribery}[#2]}
\newcommand{\onlinesystemdb}[1]{\mathrm{online}\hbox{-}\allowbreak\mathrm{{#1}}\hbox{-}\allowbreak\mathrm{Destructive}\hbox{-}\allowbreak\mathrm{Bribery}}
\newcommand{\onlinesystemdbk}[2]{\mathrm{online}\hbox{-}\allowbreak\mathrm{{#1}}\hbox{-}\allowbreak\mathrm{Destructive}\hbox{-}\allowbreak\mathrm{Bribery}[k]}
\newcommand{\onlinesystemwb}[1]{\mathrm{online}\hbox{-}\allowbreak\mathrm{{#1}}\hbox{-}\allowbreak\mathrm{Weighted}\hbox{-}\allowbreak\mathrm{Bribery}}
\newcommand{\onlinesystemwbk}[2]{\mathrm{online}\hbox{-}\allowbreak\mathrm{{#1}}\hbox{-}\allowbreak\mathrm{Weighted}\hbox{-}\allowbreak\mathrm{Bribery}[#2]}
\newcommand{\onlinesystemdwb}[1]{\mathrm{online}\hbox{-}\allowbreak\mathrm{{#1}}\hbox{-}\allowbreak\mathrm{Destructive}\hbox{-}\allowbreak\mathrm{Weighted}\hbox{-}\allowbreak\mathrm{Bribery}}
\newcommand{\onlinesystemdwbk}[2]{\mathrm{online}\hbox{-}\allowbreak\mathrm{{#1}}\hbox{-}\allowbreak\mathrm{Destructive}\hbox{-}\allowbreak\mathrm{Weighted}\hbox{-}\allowbreak\mathrm{Bribery}[#2]}
\newcommand{\onlinesystempb}[1]{\mathrm{online}\hbox{-}\allowbreak\mathrm{{#1}}\hbox{-}\allowbreak\mathrm{\$Bribery}}
\newcommand{\onlinesystempbk}[2]{\mathrm{online}\hbox{-}\allowbreak\mathrm{{#1}}\hbox{-}\allowbreak\mathrm{\$Bribery}[#2]}
\newcommand{\onlinesystemdpb}[1]{\mathrm{online}\hbox{-}\allowbreak\mathrm{{#1}}\hbox{-}\allowbreak\mathrm{Destructive}\hbox{-}\allowbreak\mathrm{\$Bribery}}
\newcommand{\onlinesystemdpbk}[2]{\mathrm{online}\hbox{-}\allowbreak\mathrm{{#1}}\hbox{-}\allowbreak\mathrm{Destructive}\hbox{-}\allowbreak\mathrm{\$Bribery}[#2]}
\newcommand{\onlinesystempwb}[1]{\mathrm{online}\hbox{-}\allowbreak\mathrm{{#1}}\hbox{-}\allowbreak\mathrm{Weighted}\hbox{-}\allowbreak\mathrm{\$Bribery}}
\newcommand{\onlinesystempwbk}[2]{\mathrm{online}\hbox{-}\allowbreak\mathrm{{#1}}\hbox{-}\allowbreak\mathrm{Weighted}\hbox{-}\allowbreak\mathrm{\$Bribery}[#2]}
\newcommand{\onlinesystemdpwb}[1]{\mathrm{online}\hbox{-}\allowbreak\mathrm{{#1}}\hbox{-}\allowbreak\mathrm{Destructive}\hbox{-}\allowbreak\mathrm{Weighted}\hbox{-}\allowbreak\mathrm{\$Bribery}}
\newcommand{\onlinesystemdpwbk}[2]{\mathrm{online}\hbox{-}\allowbreak\mathrm{{#1}}\hbox{-}\allowbreak\mathrm{Destructive}\hbox{-}\allowbreak\mathrm{Weighted}\hbox{-}\allowbreak\mathrm{\$Bribery}[#2]}
\newcommand{\cale}{{\cal E}}
\newcommand{\calc}{{\cal C}}
\newcommand{\condition}{\mid}
\def\land{{\; \wedge \;}}
\newcommand{\littlep}{{p}}
\newcommand{\sigmatwo}{{\Sigma_2^{\littlep}}}
\newcommand{\deltatwo}{{\Delta_2^{\littlep}}}
  \newtheorem{theorem}{Theorem}[section]
  \newtheorem{proposition}[theorem]{Proposition}
  \newtheorem{fact}[theorem]{Fact}
  \newtheorem{definition}[theorem]{Definition}
\newtheorem{observation}[theorem]{Observation}
\newcommand{\manyone}{\ensuremath{\leq_{m}^{p}}}
\newcommand{\dttred}{\ensuremath{\leq_{dtt}^{p}}}
\newcommand{\p}{\ensuremath{\mathrm{P}}}
\newcommand{\np}{\ensuremath{\mathrm{NP}}}
\newcommand{\npnp}{\ensuremath{\np^{\np}}}
\newcommand{\conp}{\ensuremath{\mathrm{coNP}}}
\newcommand{\pspace}{\ensuremath{\mathrm{PSPACE}}}
\newcommand{\OMIT}[1]{} %
\newcommand\sigmalevel[1]{\ensuremath{{\Sigma^p_{#1}}}}
\newcommand\pilevel[1]{\ensuremath{{\Pi^p_{#1}}}}
\def\literalqed{{\ \nolinebreak\hfill\mbox{\qedblob\quad}}}
\newcommand{\systemucm}[1]{{\mathrm{{#1}}\hbox{-}\mathrm{UCM}}}
\newcommand{\systemwcm}[1]{{\mathrm{{#1}}\hbox{-}\mathrm{WCM}}}
\newcommand{\systemducm}[1]{{\mathrm{{#1}}\hbox{-}\mathrm{DUCM}}}
\newcommand{\systemdwcm}[1]{{\mathrm{{#1}}\hbox{-}\mathrm{DWCM}}}
\newcommand{\onlinesystemucm}[1]{{\mathrm{online}\hbox{-}\mathrm{{#1}}\hbox{-}\mathrm{UCM}}}
\newcommand{\onlinesystemwcm}[1]{{\mathrm{online}\hbox{-}\mathrm{{#1}}\hbox{-}\mathrm{WCM}}}
\newcommand{\onlinesystemducm}[1]{{\mathrm{online}\hbox{-}\mathrm{{#1}}\hbox{-}\mathrm{DUCM}}}
\newcommand{\onlinesystemdwcm}[1]{{\mathrm{online}\hbox{-}\mathrm{{#1}}\hbox{-}\mathrm{DWCM}}}
\newcounter{alg}
\newcounter{subalg}
\newcommand{\lahnote}[1]{}
\newcommand{\jrnote}[1]{}
\newcommand{\ehnote}[1]{}
\newcommand{\pair}[1]{\mathopen\langle{#1}\mathclose\rangle}
\title{The Complexity
  of Online Bribery in Sequential Elections (Extended Abstract)}
\author{Edith Hemaspaandra\thanks{Supported in part
    by NSF grant DUE-1819546
    and
    a Renewed Research Stay grant from the
      Alexander von Humboldt Foundation.
Work done
    in part 
while 
  on sabbatical visits to ETH-Z\"urich and the University of
  D\"usseldorf.}
\institute{Department of Computer Science\\
        Rochester Institute of Technology\\
        Rochester, NY 14623, USA}
      \email{eh@cs.rit.edu}
\and
Lane A. Hemaspaandra\thanks{Supported in part by
      a Renewed Research Stay grant from the
      Alexander von Humboldt Foundation.
Work done
    in part 
while 
  on sabbatical visits to ETH-Z\"urich and the University of
  D\"usseldorf.}
\institute{Department of Computer Science\\
        University of Rochester\\
        Rochester, NY 14627, USA}
      \email{http://www.cs.rochester.edu/u/lane}
\and
J{\"o}rg Rothe\thanks{Supported in part by DFG grant RO~1202/14-2.}
\institute{Institut f\"ur Informatik \\
        Heinrich-Heine-Universit{\"a}t D{\"u}sseldorf  \\
        40225 D\"usseldorf, Germany}
        \email{rothe@hhu.de}
      }
            \date{June 14, 2019} %
\newcounter{extremeleftlistcounter}
\newenvironment{extremeleftlistNICE}%
  {\begin{list}{\arabic{extremeleftlistcounter}.~~}{\usecounter{extremeleftlistcounter}%
        \setlength{\itemsep}{-0pt}%
        \setlength{\labelsep}{0pt}\setlength{\leftmargin}{6pt}%
        \setlength{\labelwidth}{6pt}\setlength{\listparindent}{0pt}}}%
  {\end{list}}
\begin{document}
\sloppy

\maketitle

\begin{abstract}  
  Prior work on the complexity of bribery assumes that the bribery
  happens simultaneously, and that the 
  briber
  has full knowledge of all voters' votes.  But neither of those
  assumptions always holds.  In many real-world settings, votes come in
  sequentially, and the briber may have a 
  use-it-or-lose-it moment
  to decide whether to bribe/alter a given
  vote, and at the time of making that decision, the briber may not
  know what votes
  remaining
  voters are planning on casting.

  In this paper, we introduce a model for, and initiate the study of,
  bribery in such an online, sequential setting.  We show that even
  for election systems whose winner-determination problem is
  polynomial-time computable, an online, sequential setting may vastly
  increase the complexity of bribery, in fact jumping the problem up
  to completeness for high levels of the polynomial hierarchy or even
  $\pspace$.
  On the other hand, we show that for some
  natural, important election systems, such a dramatic
  complexity increase does not occur, and we pinpoint the complexity
  of their bribery problems in the online, sequential setting.

\end{abstract}

\section{Introduction}\label{sec:introduction}
In computational social choice theory,
the three most studied types of attacks on elections
are bribery, control, and
manipulation,
and the models of those that are studied seek to model the analogous
real-world actions.

Such studies are typically carried out for the model in which all the voters
vote simultaneously.  That sometimes is the case in the real world.
But it also is sometimes the case that the voters vote in
sequence---in what is sometimes called a roll-call election.

That type of setting has been relatively recently introduced and
studied for control and manipulation---in particular, studies have
been done of both control and manipulation 
in the so-called online, sequential
setting~\cite{hem-hem-rot:j:online-manipulation,hem-hem-rot:j:online-candidate-control,hem-hem-rot:j:online-voter-control}.
In the present paper, we study the complexity of, and algorithms for,
the online, sequential case of bribery.

Briefly put, we are studying the case where the
voting order (and the voter weights and cost of bribing each voter)
is known ahead of time to the briber.  
But at the moment a voter seeks to vote, the voter's planned
vote is revealed to the briber, who then has a use-it-or-lose-it
chance to bribe the voter, by paying the voter's bribe-price
(and doing so allows that vote to be changed to any vote the briber desires).

The problem we are studying is the complexity of that decision.  In
particular, how hard is it to decide whether under optimal play on the part
of the briber there is an action for the
briber towards the current voter 
such that under continued future optimal play by the briber (in the
face of all future revelations of unknown information being
pessimal), the briber can reach a certain goal (e.g., having one of
his or her two favorite candidates win; or not having any of his or
her three most hated candidates win).

What is an example of a situation
that might be modeled by sequential bribery?
One example would be 
the case of a new department chair meeting
sequentially with each faculty member, on the morning of a noon
faculty vote on some
crucial issue, and at each meeting quickly finding the faculty member's
preferences regarding the issue and assessing how much of a
bribe would be needed to change those, and then deciding whether
or not
to
commit that amount from the chair's (limited) discretionary raise
pool in order to execute a bribe of that faculty member.
(See the final three paragraphs
of
Section~3
of the technical-report version of
this paper~\cite{hem-hem-rot:t:online-bribery}
for a more detailed discussion of issues regarding the model,
the varying forms the costs in bribery can take (from actual
dollars to time or effort spent to risk accepted), and the fact
that, despite the typical associations with the word ``bribery,''
in many settings bribery is not modeling immoral or evil acts.)

The following list presents the section structure of our results,
and mentions some of the novel proof approaches needed to obtain them.

\begin{enumerate}

\item Sections~\ref{sec:general-upper-bounds-unlimited}
  and~\ref{sec:general-upper-bounds-limited} establish our upper
  bounds---of $\pspace$ and the $\pilevel{2k+1}$ level of the
  polynomial hierarchy---on online
  bribery (i.e., online, sequential bribery,
  but we will for the rest of the paper and especially in our problem names 
  often omit the word ``sequential'' when the word ``online''
  is present)
  in the general case and in
  the case of being restricted to at most $k$ bribes.

  The $\pilevel{2k+1}$ upper bounds are far less straightforward than
  upper bounds in the polynomial hierarchy typically are.  Since
  bribes can occur on any voter (until one runs out of allowed
  bribes), and so a yes-no decision has to be made, even for the case
  of at most $k$ bribes, there can be long strings of alternating
  existential and universal choices in the natural alternating Turing
  machine programs for the problems.  And so there is the threat that
  one can prove merely a PSPACE upper bound.

  However, in
  Section~\ref{sec:result-about-altern}, we prove a more general
  result about alternating Turing machines that, while perhaps having
  many alternations between existential and universal choices, make
  most of them in a ``boring'' way.  Basically, we show that in the
  relevant setting one can pull much of the existential guessing
  upstream and make it external to the alternating Turing machine, and
  indeed one can do so in such a way that one transforms the problem
  into the disjunction of a polynomial number of uniformly generated
  questions about actions of alternating Turing machines each of which
  itself has at most $2k+1$ alternation blocks.  From that, we
  establish the needed upper bound, both for the relevant abstract
  case of alternating Turing machines and for our online bribery
  problems.

\item\label{p:why-cool}Section~\ref{sec:match-lower-bounds} proves that there are
  election systems with simple winner problem such that each of the
  abovementioned upper-bounds is tight, i.e., that
  PSPACE-completeness holds or $\pilevel{2k+1}$-completeness holds.

  There is
  a substantial, novel challenge that the proof here has to
  overcome.  Namely, to prove for example $\pilevel{2k+1}$-hardness,
  we generally need to reduce from quantified boolean formulas with
  particular quantifiers applying to particular variables.  However,
  in online bribery, the briber is allowed to choose where to do the
  bribing.  This in effect corresponds to having a formula with
  clusters of quantified variables, yet such that we as we 
  attempt to prove theorems related to
  these  structures don't have control over which quantifiers are
  existential and which are universal.  Rather, in effect what the
  online bribery setting will test is whether there exists an
  assignment (consistent with the number of bribes allowed---which
  limits the number of existential quantifiers one can set) of each
  quantifier to be either existential or universal, such that for that
  quantifier-assignment the formula evaluates as true.  (This
  is not at all the same as quantifier exchange; in quantifier exchange, the
  exchanged quantifiers move around together with their associated
  variables.)

  However, we handle this by showing how to construct a new formula
  that builds in protection against this setting.  In particular, we
  note that one can take a quantified boolean formula and turn it into one
  such that, in this Wild West setting of quantifier assignment, the new
  formula can be made true by a legal (i.e., having at most
  as many $\exists$ quantifiers as the original formula) quantifier assignment
  exactly if the original formula is true.

\item{} 
In Section~\ref{sec:online-brib-spec},
we look at the complexity of online bribery
for various natural systems. We show that for both Plurality and
Approval, it holds that priced, weighted online bribery is $\np$-complete,
whereas all other problem variants of online bribery are in~$\p$.
This also shows that bribery can be harder than
online bribery for natural systems.
In addition, we provide complete dichotomy theorems that distinguish
NP-hard from easy cases for
all our online bribery problems for scoring rules
and additionally we show that Veto
elections, even with three candidates, have even higher lower
bounds for weighted online bribery, namely $\p^{\np[1]}$-hardness.
\end{enumerate}

Due to space limits, many of our proofs and much additional
discussion can be found only in the
technical-report version of this paper~\cite{hem-hem-rot:t:online-bribery}.
In particular, only three of our proofs are included in this
conference version and at times this version may refer to proofs
found in the technical report as if they were available to
and known to the reader.

\section{Related Work}\label{s:related}
Our paper's general area is computational social choice, in which 
studying the complexity of election and preference aggregation
problems and manipulative attacks on them is a central theme.
There
are many
excellent surveys and book chapters on computational social
choice~\cite{bra-con-end:b:comsoc,rot:b:econ,bra-con-end-lan-pro:b:handook-of-comsoc},
and computational
social choice and computational complexity have a long history of
close, mutually beneficial interaction (see the survey~\cite{hem:c:bffs}).  

The prior papers most related to our work are the papers that
defined and studied the complexity of 
online control~\cite{hem-hem-rot:j:online-candidate-control,hem-hem-rot:j:online-voter-control},
of online manipulation~\cite{hem-hem-rot:j:online-manipulation}, and
of (regular) bribery~\cite{fal-hem-hem:j:bribery}.  Particularly
important among those is online manipulation, as we will show
connections/inheritance between online manipulation and our model.  We
also will show connections/inheritance between (regular) 
manipulation and our model.  Regular manipulation was introduced by
Bartholdi, Tovey, and Trick~\cite{bar-tov-tri:j:manipulating} in the
unweighted case and by Conitzer, Sandholm, and
Lang~\cite{con-lan-san:j:when-hard-to-manipulate} in the weighted
case.

The existing work most closely related to our work on the effect of
formulas on limits on existential actions is the work on online voter
control~\cite{hem-hem-rot:j:online-voter-control}, though the issues
tackled here are different and harder.

The work of Xia and Conitzer~\cite{con-xia:c:stackelberg-sequential}
(see also
\cite{des-elk:c:sequential-voting,slo:j:sequential-voting,dek-pic:j:sequential-voting-binary-elections})
that
defines and explores the Stackelberg voting game is also about
sequential voting, although unlike this paper
their analysis is game-theoretic
and is about manipulation rather than bribery.
Sequential (and related types of) voting have also been studied
in an axiomatic way~\cite{ten:c:transitive-voting}
and using Markov decision processes~\cite{par-pro:c:dynamic-social-choice},
though neither of those works focuses on issues of bribery.
Poole and Rosenthal~\cite{poo-ros:b:congress} provide a history
of roll-call voting.

Our approach to the briber's goal, which is assuming worst-case
revelations of information, is inspired by the approach used in the
area known as online algorithms~\cite{bor-ely:b:online-algorithms}.

Interesting work that is related---though somewhat distantly---in
flavor to our study is the paper of Chevaleyre et
al.~\cite{che-lan-mau-mon-xia:j:possible-winners-adding-welcome}
on the addition of candidates.  They also focus on the moment at which one
has to make a key decision, in their case whether all of a group of
potential additional candidates should be added.

\section{Preliminaries}\label{sec:preliminaries}

\subsection{Basics}\label{sec:basics}
$\p$ is the class of decision problems in deterministic polynomial time.
$\np$ is the class of decision problems in nondeterministic polynomial time.
For each $k\geq 0$,
$\sigmalevel{k}$
is the class of decision problems in the $k$th $\Sigma$ level of
the polynomial hierarchy~\cite{mey-sto:c:reg-exp-needs-exp-space,sto:j:poly},
e.g., $\sigmalevel{0} = \p$,
$\sigmalevel{1} = \np$,
and $\sigmalevel{2} = \npnp$ (i.e., the class of sets accepted by
nondeterministic polynomial-time 
oracle Turing machines
given unit-cost access to an NP oracle).
For each $k\geq 0$,
$\pilevel{k} = \{ L \condition \overline{L} \in \sigmalevel{k}\}$, e.g., 
$\pilevel{0} = \sigmalevel{0} = \p$,
$\pilevel{1} = \conp$, and 
$\pilevel{3} = \conp^{\np^\np}$.
Further, $\deltatwo = \p^{\np}$ is the class of decision problems
solvable by a deterministic polynomial-time oracle Turing machine with
access to an $\np$ oracle, and $\p^{\np[1]}$ is the same class
restricted to one oracle query.

We say that $A \manyone B$ ($A$ polynomial-time many-one reduces to $B$)
exactly if there is a polynomial-time computable function $f$ such that
$(\forall x)[ x\in A \iff f(x) \in B]$.
\begin{fact}\label{f:m-closure}
For each complexity class $\calc
\in \{
\sigmalevel{0},\allowbreak
\sigmalevel{1},\allowbreak
\pilevel{1},\allowbreak
\sigmalevel{2},\allowbreak
\pilevel{2}, \dots\}$,
$\calc$ is closed downwards under 
polynomial-time many-one reductions, i.e.,
$(B  \in \calc \land A \manyone B) \implies A \in \calc$.
\end{fact}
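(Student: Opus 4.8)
The plan is to use the standard polynomially-length-bounded quantifier characterization of the levels of the polynomial hierarchy, handling the $\Sigma$ levels directly and deriving the $\Pi$ levels by complementation. Recall that for $k \ge 1$ a set $B$ lies in $\sigmalevel{k}$ exactly if there are a polynomial-time computable predicate $R$ and polynomials $p_1, \dots, p_k$ such that $x \in B$ holds precisely when $\exists y_1\, \forall y_2 \cdots Q_k y_k\, R(x, y_1, \dots, y_k)$, where each $y_i$ ranges over strings with $|y_i| \le p_i(|x|)$ and $Q_k$ alternates appropriately. The base case $\calc = \sigmalevel{0} = \pilevel{0} = \p$ is immediate: if $B$ is decided by a polynomial-time procedure $M_B$ and $A \manyone B$ via a polynomial-time computable $f$, then $x \mapsto M_B(f(x))$ decides $A$ in polynomial time, so $A \in \p$. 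The key mechanism in the remaining cases is that composing a many-one reduction with such a quantifier characterization leaves the quantifier block structure untouched and only alters the (still polynomial-time) matrix and the (still polynomial) length bounds.

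For $\calc = \sigmalevel{k}$ with $k \ge 1$, suppose $B \in \sigmalevel{k}$ with witnessing $R$ and $p_1, \dots, p_k$, and suppose $A \manyone B$ via $f$. Since $f$ is polynomial-time computable, fix a polynomial $q$ with $|f(x)| \le q(|x|)$ for all $x$. Then $x \in A \iff f(x) \in B \iff \exists y_1\, \forall y_2 \cdots Q_k y_k\, R(f(x), y_1, \dots, y_k)$, with $|y_i| \le p_i(|f(x)|) \le p_i(q(|x|))$. Defining $R'(x, y_1, \dots, y_k) = R(f(x), y_1, \dots, y_k)$ yields a polynomial-time predicate (compute $f(x)$, then run $R$), and each $p_i \circ q$ is again a polynomial, so this is exactly a $\sigmalevel{k}$ characterization of $A$; hence $A \in \sigmalevel{k}$.

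Finally, for $\calc = \pilevel{k}$ I would reduce to the $\Sigma$ case by complementation. The same function $f$ that witnesses $A \manyone B$ also witnesses $\overline{A} \manyone \overline{B}$, since $x \in \overline{A} \iff f(x) \notin B \iff f(x) \in \overline{B}$. If $B \in \pilevel{k}$ then $\overline{B} \in \sigmalevel{k}$ by definition, so by the $\Sigma$ case just established $\overline{A} \in \sigmalevel{k}$, whence $A \in \pilevel{k}$. There is no genuine obstacle here; the only point requiring care will be the composition of length bounds, namely the observation that $p_i \circ q$ remains a polynomial so that the substituted predicate still has polynomially bounded quantifier ranges. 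Everything else is routine bookkeeping.
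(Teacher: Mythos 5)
Your proof is correct, but it takes a genuinely different route from the paper. The paper gives no direct proof of this fact at all: it is stated as well known, and it is in effect the single-query special case of Fact~\ref{f:dtt-closure} (closure under polynomial-time disjunctive truth-table reductions, of which many-one reductions are the exactly-one-output case), which the paper justifies by invoking Selman's theorem that $\np$ is closed downwards under positive Turing reductions and then relativizing the $\np$/$\conp$ cases (by complete sets for $\np$, $\npnp$, etc.)\ to reach the higher levels. Your argument is instead elementary and self-contained: you use the standard quantifier characterization of $\sigmalevel{k}$, substitute $f(x)$ into the matrix, and handle $\pilevel{k}$ by complementation, observing that the same $f$ reduces $\overline{A}$ to $\overline{B}$. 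What you gain is independence from any cited theorem and from relativization; what the paper's route gains is brevity and strength, since the paper actually needs the stronger dtt-closure fact later (it is what makes the proof of Theorem~\ref{t:2k+1} work), and your many-one argument does not by itself yield that. One caveat: you flag the wrong step as the one ``requiring care.'' That $p_i \circ q$ is a polynomial is trivial; the step that genuinely needs justification is that replacing each quantifier bound $p_i(|f(x)|)$ by the larger bound $p_i(q(|x|))$ preserves the equivalence. For an arbitrary matrix $R$ it need not: an over-long existentially quantified $y_i$ could satisfy $R$ and make the enlarged formula true even though $f(x) \notin B$. The standard repair is to normalize $R$ first---replace it by a predicate that rejects whenever an existentially quantified $y_i$ violates its original bound and accepts whenever a universally quantified $y_i$ does---after which your substitution goes through verbatim.
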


Each of the classes mentioned in
Fact~\ref{f:m-closure} is even closed
downwards under what is known as polynomial-time
disjunctive truth-table reducibility~\cite{lad-lyn-sel:j:com}.
Disjunctive truth-table reducibility can be defined as follows.
We say that $A \dttred B$ ($A$ polynomial-time disjunctive truth-table
reduces to $B$)
exactly if there is a polynomial-time computable function $f$ such
that, for each $x$, it holds that (a)~$f(x)$ outputs a list of 0 or more
strings, and
(b)~$x \in A$ if and only if at least one string output by $f(x)$ is a
member of $B$.
(Polynomial-time many-one reductions are simply the special case of
polynomial-time disjunctive truth-table reductions where 
the polynomial-time disjunctive truth-table reduction's output-list function
is required to always
contain exactly one element.)
\begin{fact}\label{f:dtt-closure}
For each complexity class $\calc
\in \{
\sigmalevel{0},\allowbreak
\sigmalevel{1},\allowbreak
\pilevel{1},\allowbreak
\sigmalevel{2},\allowbreak
\pilevel{2}, \dots\}$,
$\calc$ is closed downwards under 
polynomial-time disjunctive truth-table reductions, i.e., 
$(B  \in \calc \land A \dttred B) \implies A \in \calc$.
\end{fact}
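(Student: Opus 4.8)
The plan is to reduce the statement to a single clean closure property and then invoke Fact~\ref{f:m-closure}. First I would observe that an $\dttred$ reduction via its list-producing function $f$ is literally a many-one reduction of $A$ to the \emph{disjunctive closure} of $B$, defined by
\[ B^{\vee} = \{ \pair{y_1,\dots,y_m} : m \geq 0 \text{ and } (\exists i)[y_i \in B] \}. \]
Indeed, $x \in A$ iff at least one string in the list $\pair{y_1,\dots,y_m}$ produced by $f(x)$ lies in $B$, i.e., iff $f(x) \in B^{\vee}$, so $A \manyone B^{\vee}$. Hence, by Fact~\ref{f:m-closure}, it suffices to prove that each class $\calc$ in the list is closed under the operation $B \mapsto B^{\vee}$; combined with $A \manyone B^{\vee}$ this yields $A \in \calc$. (Note $f$ is polynomial-time computable, so the number of strings in each list and their lengths are polynomially bounded, and the empty-list case $m=0$ simply makes membership fail, consistent with the empty disjunction.)

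For $\calc = \sigmalevel{0} = \p$, closure under the disjunctive-closure operation is immediate: to decide $\pair{y_1,\dots,y_m} \in B^{\vee}$, run the polynomial-time test for $B$ on each of the polynomially many $y_i$ and accept iff one succeeds. For the remaining classes I would use the standard polynomially-length-bounded quantifier characterization of the polynomial hierarchy: $B \in \sigmalevel{k}$ (for $k \geq 1$) iff there is a polynomial-time predicate $R$ with $y \in B \iff (\exists w_1)(\forall w_2)\cdots(Q_k w_k)\,R(y,w_1,\dots,w_k)$ under polynomial length bounds, and $B \in \pilevel{k}$ iff the analogous characterization holds with the leading quantifier universal. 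In the $\sigmalevel{k}$ case the argument is short: write $\pair{\vec y} \in B^{\vee} \iff (\exists i)(\exists w_1)(\forall w_2)\cdots(Q_k w_k)\,R(y_i,\vec w)$ and merge the two leading existential blocks $(\exists i)(\exists w_1)$ into a single existential over the pair $\pair{i,w_1}$; the resulting predicate (which additionally checks $i \leq m$ and extracts $y_i$ from the list) is still polynomial-time, and the prefix still has $k$ alternating blocks beginning with $\exists$, so $B^{\vee} \in \sigmalevel{k}$.

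The $\pilevel{k}$ case is where the real work lies, and I expect it to be the main obstacle, because a naive reading produces a spurious extra alternation. Here I would pass to complements: since $B \in \pilevel{k}$ gives $\overline{B} \in \sigmalevel{k}$, and $\pair{\vec y} \in \overline{B^{\vee}} \iff (\forall i)[y_i \in \overline{B}]$, it suffices to show $\overline{B^{\vee}} \in \sigmalevel{k}$, i.e., that $\sigmalevel{k}$ absorbs a polynomially-bounded \emph{conjunction} of membership questions. Writing out the $\sigmalevel{k}$ matrix for $\overline{B}$ gives $(\forall i)(\exists w_1^{(i)})(\forall w_2^{(i)})\cdots(Q_k w_k^{(i)})\,S(y_i,\vec w^{(i)})$, which at face value begins $\forall\exists$ and threatens only a $\sigmalevel{k+1}$ bound. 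The key maneuver is to hoist the first existential choices out front for all $i$ simultaneously: replace the family $\{w_1^{(i)}\}_i$ (one short string per the polynomially many values of $i$) by a single existential guess of the tuple $\pair{w_1^{(1)},\dots,w_1^{(m)}}$ placed before $(\forall i)$, turning the prefix into $(\exists \text{ tuple})(\forall i)(\forall w_2^{(i)})\cdots$. Now the $(\forall i)$ no longer precedes an existential: for $k \geq 2$ it merges with the following $\forall w_2$ block, while for $k = 1$ it has no following quantifier and, ranging over only polynomially many values with a polynomial-time matrix, is simply folded into a new polynomial-time predicate. In every case the alternation count remains $k$ with a leading $\exists$, so $\overline{B^{\vee}} \in \sigmalevel{k}$ and hence $B^{\vee} \in \pilevel{k}$, finishing the argument. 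The one point demanding genuine care is exactly this bookkeeping at the boundary $k=1$, together with the justification that collapsing a polynomially-bounded universal quantifier into the matrix (or into an adjacent universal block) preserves both polynomial-time computability of the matrix and the total block count.
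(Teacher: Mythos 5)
Your proof is correct, but it takes a genuinely different route from the paper's. The paper disposes of the fact in a few lines by citation and relativization: the $\p$ case is declared obvious, the $\np$ and $\conp$ cases are noted to follow from Selman's theorem that $\np$ is closed downwards under positive Turing reductions, and the higher levels $\sigmalevel{k}$, $\pilevel{k}$ ($k \geq 2$) are then obtained by relativizing the $\np$/$\conp$ arguments with oracle sets complete for $\np$, $\npnp$, and so on. Your argument is instead fully self-contained: you factor the reduction through the disjunctive closure $B^{\vee}$ (so that Fact~\ref{f:m-closure} finishes the job), and you verify closure of each class under $B \mapsto B^{\vee}$ by direct manipulation of the quantifier characterization of the hierarchy. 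Your handling of the $\pilevel{k}$ case is the right one and is exactly where a naive argument breaks: the hoisting of the polynomially many first-block witnesses $w_1^{(1)},\dots,w_1^{(m)}$ into a single existentially quantified tuple is valid precisely because the universal quantifier $(\forall i)$ ranges over polynomially many \emph{values} (not polynomially length-bounded strings), and your bookkeeping at $k=1$ (folding $(\forall i)$ into the matrix) versus $k \geq 2$ (merging it with the $\forall w_2$ block) is sound. What each approach buys: yours needs no external theorem and exposes the combinatorial core of why disjunctive queries are absorbed; the paper's is much shorter and, as it notes, yields a strictly stronger conclusion---downward closure even under dtt reductions that are themselves relativized---which falls out of the relativization machinery but would require extra work in your framework.
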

The above fact is obvious for P, and is
easy to see and well known
for NP and $\conp$ (for example, the results follow
immediately from
the
result of Selman~\cite{sel:j:reductions-pselective} that NP
is closed downwards under so-called positive Turing
reductions).
The
results for the NP and coNP cases relativize (as Selman's mentioned
result's proof clearly relativizes), 
and that
gives
(namely, by relativizing the NP and coNP cases by complete sets
for NP, $\npnp$, etc.)\ the
claims for the higher levels of the polynomial hierarchy
(in fact, it gives something even stronger, since it gives downward
closure under disjunctive truth-table reductions that themselves
are relativized, but we won't need that stronger version in this paper).

Since all the many-one and disjunctive truth-table reductions discussed
in the paper will be polynomial-time ones, we henceforth will
sometimes skip the words ``polynomial-time'' when speaking of a polynomial-time
many-one or disjunctive truth-table reduction.

A set $L$ is said to be (polynomial-time many-one) hard for a class
$\calc$ (for short, ``$L$ is $\calc$-hard'')
exactly if $(\forall B \in \calc)[B
\manyone L]$.  If in addition $L \in \calc$, we say that 
$L$ is polynomial-time many-one complete for $\calc$, or simply
that $L$ is $\calc$-complete.

An (unweighted) election system $\cale$ takes as input
a voter collection $V$ and a candidate set $C$, such that
each element of $V$ contains a voter name
and a preference order
over the candidates in $C$;
and for us in this paper preference orders are always
total orders,
except when we are speaking of approval voting where
the preference orders are bit-vectors from $\{0,1\}^{\|C\|}$.
(For
  the rest of the preliminaries, we'll always speak of
  total orders as the preference orders' type, with it being implicit
  that when later in the paper we speak of
  and prove results about approval voting, all such places will
  tacitly be viewed as speaking of bit-vectors.)
The election system maps from that to
a (possibly nonproper) subset of $C$, often called the
winner set.
We often will call each element of $V$ a vote,
though as is common
sometimes we will use the term vote to refer just to the preference
order.
We often will use the variable names $\sigma$, $\sigma_1$, $\sigma_2$,~\dots,
$\sigma_i$ for total orders.
We allow election systems to, on some inputs, have no
winners.\footnote{Although
in social choice this is often disallowed, as has been 
discussed previously, see, e.g.,~\cite[Footnote~3]{fit-hem-hem:j:xthenx},
artificially excluding the case of
no winners
is unnatural, and many papers in computational
social choice allow this case.
A typical
real-world motivating example is that in Baseball Hall of Fame votes,
having no inductees in a given year is a natural outcome that
has at times occurred.}

For a given  (unweighted, simultaneous) election system, $\cale$, the
(unweighted) winner (aka the winner-determination) problem (in the
unweighted case) 
is the set $\{\pair{C,V,c} \condition c \in C \land$ $c$
is a winner of the election $(C,V)$ under election system~$\cale\}$.

For a given (unweighted~[sic], simultaneous) election system, $\cale$,
the winner problem in the weighted~[sic] case will be the set of all
strings $\pair{C,V,c}$ such that $C$ is a candidate set, $V$ is a set
of weighted (via binary nonnegative integers as weights) votes (each
consisting of a voter name and a total order over $C$),
$c \in C$, and in the unweighted election created from this by
replacing each $w$-weighted vote in $V$ with $w$~unweighted copies of
that same vote, $c$~is a winner in that election under the
(unweighted) election system~$\cale$.
For an election system
$\cale$, it is clear that if the winner problem in the weighted
case is in $\p$, then so is the winner problem in the unweighted
case.  However, it is not hard to
show that there are election systems $\cale$ for which the
converse fails.
The above approach to defining the weighted winner problem
is natural and appropriate for the election systems discussed
in this paper.
(The technical-report
version of this 
paper~\cite{hem-hem-rot:t:online-bribery} has a 
detailed discussion of the strengths and weaknesses of using this approach
to the weighted winner problem in other settings, see also
the discussion and results in~\cite{fit-hem:t:succinct}.)

\subsection{Online Bribery in Sequential Elections}\label{sec:online-brib-sequ}

This paper is about the study of online bribery in sequential
elections.  In this setting, we are---this is the sequential
part---assuming that the voters vote in a well-known order,
sequentially, with each casting a ballot that expresses preferences
over all the candidates.  And we are assuming---this is the online
part---that the attacker, called ``the briber,'' as each new vote comes
in has his or her one and only chance to bribe that voter, i.e., to
alter that vote to any vote of the briber's choice.

Bribery has aspects of both the other standard
types of electoral attacks: bribery is like
manipulation in that one changes votes and it is like
(voter) control in that one is deciding on a set of voters (in
the case of bribery, which ones to bribe).  Reflecting
this, our model follows as closely as possible the relevant parts of
the existing models that study manipulation and control in online
settings~\cite{hem-hem-rot:j:online-manipulation,hem-hem-rot:j:online-voter-control,hem-hem-rot:j:online-candidate-control}.
In particular, we will follow insofar as possible both the model
of, and the notation of the model of, the
paper by Hemaspaandra, Hemaspaandra, and
Rothe~\cite{hem-hem-rot:j:online-voter-control} that
introduced the
study of online voter control in sequential elections. In
particular, we will follow the flavor of their model of control by
deleting voters, except here the key decision is not
whether to delete a given voter, but rather is whether a given voter
should be bribed, i.e., whether the voter's vote should be erased and
replaced with a vote supplied by the briber.  That ``replace[ment]''
part is more
similar to what happens in the study of online manipulation, which
was modeled and studied by Hemaspaandra, Hemaspaandra, and
Rothe~\cite{hem-hem-rot:j:online-manipulation}.  We will, as both
those papers do, focus on a key moment---a moment of
decision---and in particular on the complexity of deciding whether
there exists an action the briber can take, at that moment, such that
doing so will ensure, even under the most hostile of conditions regarding
the information that has not yet been revealed, that the briber 
will be able to
meet his or her goal.

If $u$ is a
voter and $C$ is a candidate set, an
\emph{election snapshot for $C$ and $u$} 
is specified by a triple $V =
(V_{<u}, u, V_{>u})$,
which loosely put is 
made up of all voters in the order they vote, each accompanied
in models where there are prices and/or weights with their prices
and/or weights (which in this paper are assumed to
be nonnegative integers coded
in
binary).\footnote{Why do we feel it
natural in most situations
for the prices
and weights be in binary rather than unary?  A TARK referee,
for example, asked whether it was not natural to assume that 
prices and weights would always be small, or if not, would always be multiples
of some integer that when divided out would make the remaining numbers small.

Our
answer is that both prices and weights in many settings tend to be large,
and without any large, shared-by-all divisor.  To see this
clearly regarding weights, 
consider for example  
the number of shares of stock the various stockholders
hold in some large corporation or the
number of residents in each of the states of a country.  Prices 
too are potentially as rich and varied as are individuals and
objects, e.g., in some settings
each person's bribe-price might be the
exact fair market value of his or her house,
or might be closely related to the number of visits a web site they
own has had in the past year.

Pulling back, we note that requiring prices and weights to be in
unary is often tremendously (and arguably inappropriately)
\emph{helping} the algorithms as to
what their complexity is, since in effect one is ``padding'' the
many inputs' sizes as much as exponentially.  But if 
weighted votes are viewed as indivisible objects---and that is
indeed how they are typically
treated in the literature---the right approach
indeed is to code the weights in binary, and not to give
algorithm designers the potentially vastly lowered bar created by
the padding effect of coding the weights in unary.  Indeed,
it is known
in the study of (nonsequential) bribery that changing prices or 
weights to unary can shift problems' complexities from
NP-hardness
to being in deterministic polynomial time~\cite[pp.~500--504]{fal-hem-hem:j:bribery}.

Also, people typically do code natural numbers
in binary, not unary.}
In addition, for each
voter voting before $u$ (namely, the voters in $V_{<u}$), also
included in this listing 
will be
the vote they cast (or if they were bribed, what vote was
cast for them by the briber) 
and whether they were bribed; and for $u$ the listing will also
include the
vote $u$ will cast unless bribed to cast a different vote.
So $V_{>u}$ is simply a list, in the order
they will vote, of the voters, if any, who come
after $u$, each also including the voter's price and/or weight data if we are
in a priced and/or weighted setting.  Further, the vote for $u$ and 
all the votes
in $V_{<u}$ must be votes over the candidate set $C$ (and in particular,
in this paper votes are total orderings of the candidates, e.g.,
$a > b> c$).\footnote{%
There is a slight overloading of notation above, in that we have not
explicitly listed in the structure the location of the mentioned extra
data.  In fact, our actual definition is that the first and last 
components of the 3-tuple $V$ are lists of tuples, and the middle
component is a single tuple.
Each of these contain the appropriate information, as mentioned above.
For example,  for priced, weighted bribery:
\begin{extremeleftlistNICE}
\item\label{item:1}
  the elements of the list $V_{<u}$ will be 5-tuples $(v_i,p_i,w_i,\sigma_i,b_i)$
whose components respectively are the voter's name, the voter's price,
the voter's weight, the voter's cast ballot (which is the voter's
original preference order if the voter was not bribed and is whatever
the voter was bribed into casting if the voter was bribed),
and
a bit specifying whether that vote resulted from being bribed,
\item\label{item:2} the middle component of
$V$ will be a tuple that contains the first four of those five components,
and
\item\label{item:3} the elements of the list $V_{>u}$ will contain the first three
  of the above-mentioned five components.
\end{extremeleftlistNICE}
Similarly, for example, for unpriced, unweighted bribery, the
three tuple types 
would respectively have three components, two components, and one component.

As a remaining tidbit of notational overloading, 
in some places we will speak of $u$ when we
in fact mean the voter name that is the first component of the
tuple that makes up the middle tuple of $V$.  That is, we will use $u$
both for a tuple that names $u$ and gives some of its properties,
and as a stand-in for the voter him- or herself.  Which
use we mean will 
always~be~clear~from~context.%
}

Let us, with the above in hand, define our notions of online bribery
for sequential elections.  Settings can independently allow or not
allow prices and weights, and so we have four basic types of bribery
in our online, sequential model, each having both
constructive and destructive versions.

In the original paper on
nonsequential bribery there were other types of bribery, e.g.,
microbribery,
unary-coding,
and succinct variants~\cite{fal-hem-hem:j:bribery}. 
Many other types have been studied since, e.g., nonuniform
bribery~\cite{fal:c:nonuniform-bribery}  and 
swap- (and its special case \mbox{shift-)~bribery}~\cite{elk-fal-sli:c:swap-bribery}
(see also~\cite{elk-fal:c:shift-bribery,bre-che-fal-nic-nie:j:prices-matter-shift-bribery}).
However,
for compactness and
since they are very natural, this paper focuses completely on
(standard) bribery in its eight
typical versions (as to prices,
weights, and constructive/destructive), except now in
an online, sequential setting.

Our specification of
these problems as languages is
centered around what Hemaspaandra, Hemaspaandra,
and Rothe~\cite{hem-hem-rot:j:online-manipulation} called a
\emph{magnifying-glass moment}.  This is a moment of decision as to
a particular voter.  To capture precisely what information the
briber does and does not have at that moment, and to thus allow us to
define our problems, we define a structure that we will call an
\emph{OBS}, which stands for \emph{online bribery setting}.  An
OBS is defined as a 5-tuple
$(C, V, \sigma, d, k)$, where $C$ is a set
of candidates;
$V = (V_{<u}, u,
V_{u<})$ is an election snapshot for $C$ and~$u$ as discussed
earlier; $\sigma$ is the
preference order of the briber;
$d \in C$ is a distinguished candidate;
and $k$ is a nonnegative integer (representing for unpriced cases the
maximum number of voters that can be bribed, and for priced cases
the maximum total cost, i.e., the sum of the prices 
of all the bribed voters).

Given an election system~$\cale$, we define the \emph{online 
  unpriced, unweighted
  bribery problem}, abbreviated by
$\onlinesystemb{\cale}$, as the following decision problem.
The input is an OBS\@.
And the question is:
Does there exist a legal (i.e., not violating 
whatever bribe limit holds) 
choice by the briber
on
whether to bribe $u$ (recall that $u$ is specified
in the OBS,
namely, via 
the middle component of $V$) and, if the choice is
to bribe, of what vote to bribe $u$ into casting,
such that if the briber makes that choice 
then no matter what votes the remaining voters after
$u$ are (later) revealed
to have, the briber's goal (the meeting of which itself depends on
$\cale$ and will be defined explicitly two paragraphs from
now) can be reached by the current
decision regarding $u$ and by using the briber's future (legal-only, of
course) decisions 
(if any), each being made using the briber's then-in-hand
knowledge about what votes have been cast
by then?

Note that this approach is about alternating quantifiers.  It is
asking whether there is a current choice by the briber such that for all
potential revealed vote values for the next voter there exists a
choice by the briber such that for all potential revealed vote values
for the next-still voter there exists a choice by the briber such
that\ldots~and so on\ldots~such that the resulting winner set
under election system $\cale$ meets the briber's goal.
This is a bit more subtle than it might at first seem.  The briber is
acting very powerfully, since the briber is represented by existential
quantifiers.  But the briber is not all-powerful in this model.  In
particular, the briber can't see and act on future revelations of vote values;
after all, those are handled by a universal quantifier that occurs
downstream from an existential quantifier that commits the briber to
a particular choice.

In the above we have not defined what ``the briber's goal'' is, so
let us do that now.
$W_{\cale}(C,U)$
will denote
the winner set, according
to election system $\cale$,
of the (standard, nonsequential) election $(C,U)$, where $C$ is
the candidate set and $U$ is the set of votes.
By \emph{the briber's goal} we mean, in the constructive case, that if
at the end of the above process $U'$ is the set of votes (some may
be the
original ones and some may be the result of bribes), it holds that
$W_{\cale}(C,U') \cap \{c \condition c \geq_{\sigma} d\} \neq
\emptyset$, i.e., the winner set includes some candidate (possibly
itself being~$d$) that the briber likes at least as much as the briber
likes $d$.  In the destructive case, the goal is to ensure that
no 
candidate that the briber hates as much or more than the briber hates
$d$ belongs to the winner set, i.e., the briber's goal is to ensure
that
$W_{\cale}(C,U') \cap \{c \condition d \geq_{\sigma} c\} =
\emptyset$.

Above, we have defined both 
$\onlinesystemb{\cale}$ and 
$\onlinesystemdb{\cale}$.  Those both are in the
unpriced, 
unweighted 
setting.  And so
as per our definitions, the voters passed in as part of the problem statement
do not come with or need price or weight information.

But our definitions note that for the priced and/or weighted settings,
the OBS will carry the prices and/or weights.  And so the same
definition text that was used above defines all the other cases, except
that one must keep in mind
for the priced cases that when the ``bribery limit'' is mentioned one must
instead speak of the ``bribery budget,'' and in the weighted cases the
winner set $W$ is of course defined in terms of the weighted version
of the given voting system (which must, for that to be meaningful,
have a well-defined notion of what its weighted version is; 
Section~\ref{sec:basics} provides that notion for all systems 
in this paper).
Thus, we
also have tacitly defined the six problems
$\onlinesystempb{\cale}$,
$\onlinesystemdpb{\cale}$, $\onlinesystemwb{\cale}$,
$\onlinesystemdwb{\cale}$, $\onlinesystempwb{\cale}$, and
$\onlinesystemdpwb{\cale}$.
For an unpriced online bribery problem, we will postpend
the problem name with a ``[$k$]'' to define the version where
as part of the problem definition itself the bribery limit is---in
contrast with the above unpriced problem---not part of
the input
but rather is fixed to be the value $k$.
For example, $\onlinesystembk{\cale}{k}$ denotes the
unpriced, unweighted bribery problem where the number of voters who can
be bribed is set not by the problem input but rather is limited to be
at most $k$.  Note that in each of the  ``[$k$]'' variants,
we tacitly are altering the definition of OBS from its standard
5-tuple,
$(C, V, \sigma, d, k)$,
to instead the 4-tuple
$(C, V, \sigma, d)$; that is because for these cases, the $k$ is
fixed as part of the general problem itself, rather than being a
variable part of
the individual instances.  For priced ``[$k$]'' variants,
there will be both a limit (being a variable part
of the input) on the total price of the bribes and a fixed as part
of the general problem itself
limit on the number of voters who can be bribed.

Of course, there are some immediate relationships that hold between
these eight problems.  One has to be slightly careful since there is a
technical hitch here.  We cannot for example simply claim that
$\onlinesystembk{\cale}{k}$ is a subcase of
$\onlinesystempbk{\cale}{k}$.  If we had implemented the unpriced case
by still including prices in the input but requiring them all to be 1,
then it would be a subcase.  But regarding both prices (weights), our
definitions simply omit them completely from problems that are not
about prices (weights).  In spirit, it is a subcase, but formally it
is not.  Nonetheless, we can still reflect the relationship between these
problems, namely, by stating how they are related via polynomial-time
many-one reductions. (We could even make claims regarding more restrictive
reduction types, but since this paper is concerned with complexity classes
that are
closed downwards under polynomial-time many-one
reductions, there is no
reason to do so.)  The following proposition (and the connections that follow from
it by the transitivity of polynomial-time many-one reductions) captures
this.

\begin{proposition}\label{p:reductions}
  \begin{enumerate}
    \item
For each $k>0$ and for each election system $\cale$,
\begin{itemize}
   \item 
      $\onlinesystembk{\cale}{k} \manyone \onlinesystempbk{\cale}{k}$,
    \item 
      $\onlinesystembk{\cale}{k} \manyone \onlinesystemwbk{\cale}{k}$,
\item $\onlinesystempbk{\cale}{k} \manyone\onlinesystempwbk{\cale}{k}$, and 
\item $\onlinesystemwbk{\cale}{k} \manyone\onlinesystempwbk{\cale}{k}$.
\end{itemize}
\item The above item also holds for the case when all of its
  problems are changed to their destructive versions.
\item The above two items also hold for the case when
  all the ``$[k]$''s are removed (e.g., we have 
  $\onlinesystemb{\cale} \manyone \onlinesystempb{\cale}$).
\end{enumerate}
\end{proposition}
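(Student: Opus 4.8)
The plan is to exhibit, for each of the four claimed reductions in item~1, an explicit polynomial-time ``padding'' map that decorates a source instance with default price and/or weight data chosen so that the \emph{set of legal briber strategies}—and hence the truth value of the alternating-quantifier question that defines membership—is preserved exactly. In every case the output OBS uses the same $C$, the same $\sigma$ and $d$, and a profile whose underlying preference orders (original votes and voter names) are copied verbatim; only the per-voter annotations and the bribe-limit component are adjusted. Correctness will follow from two observations that recur throughout: first, that $W_{\cale}$ is computed identically along every branch of the quantifier tree (prices never enter winner determination, and an all-weight-$1$ profile yields, by the definition of the weighted winner problem in Section~\ref{sec:basics}, precisely the unweighted election); and second, that the map leaves the briber's legality constraints either unchanged or re-encoded faithfully.

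First I would dispatch the two weight-adding reductions, $\onlinesystembk{\cale}{k} \manyone \onlinesystemwbk{\cale}{k}$ and $\onlinesystempbk{\cale}{k} \manyone \onlinesystempwbk{\cale}{k}$. The map assigns weight~$1$ to every voter, inserting that weight into each tuple of $V_{<u}$, into the middle tuple $u$, and into each tuple of $V_{>u}$, in the slot demanded by the target problem's tuple format; the limit $k$ (and, in the priced case, all prices and the budget) is carried over unchanged. Since the target's winner set on an all-weight-$1$ profile equals the source's unweighted winner set, and since the legality constraint (at most $k$ bribes, plus the budget in the priced case) is untouched, the briber's legal play sequences are in bijection with the original, so the answer is preserved. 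Next I would handle the price-adding reductions $\onlinesystembk{\cale}{k} \manyone \onlinesystempbk{\cale}{k}$ and $\onlinesystemwbk{\cale}{k} \manyone \onlinesystempwbk{\cale}{k}$. Here the subtlety flagged just before the proposition surfaces: the source constrains the briber by a bribe \emph{count}, whereas the target constrains by a \emph{budget}. I would assign price~$1$ to every voter and set the target budget equal to $k$; under unit prices, ``total cost at most $k$'' holds exactly when at most $k$ voters are bribed, so the budget re-encodes the count constraint, while the target $[k]$-problem's simultaneously present fixed $k$-bribe ceiling then imposes the identical condition and is harmless. As before $W_{\cale}$ is unaffected, legal strategies correspond, and the outcome is preserved.

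For item~2 I would simply note that all four maps are literally the same functions, because the constructive/destructive distinction lives entirely in the goal predicate applied to the final profile $U'$, which depends only on $C$, $\sigma$, $d$, and $W_{\cale}(C,U')$—none of which the padding alters—so the same verifications apply verbatim. For item~3 (removing the ``$[k]$''s) the weight-adding maps are unchanged and correct for the same reason, now carrying over the input's variable bribe-count component as the target's bribe-count component. The only step I expect to require genuine (if mild) care is the two price-adding maps in this setting, since the \emph{general} priced problem has no separate bribe-count ceiling at all, so correctness rests solely on the unit-price trick: setting every price to $1$ and the budget to the source instance's $k$ makes ``total cost $\le k$'' coincide with ``at most $k$ bribes,'' which is precisely the source constraint, so the map remains a correct many-one reduction even without any count ceiling in the target. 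Each of the maps is plainly polynomial-time computable, being a single pass that appends a constant amount of data per voter and rewrites one integer component, which completes all three items.
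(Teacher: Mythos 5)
Your proposal is correct and matches the paper's intended argument exactly: the paper presents this proposition as immediate from its surrounding discussion (the remark that the unpriced problem would literally be a subcase of the priced one ``if we had implemented the unpriced case by still including prices in the input but requiring them all to be 1''), and your unit-weight/unit-price padding with budget set to $k$ is precisely that idea made formal. Your added care about the fixed bribe-count ceiling in the priced $[k]$ variants and about all-weight-$1$ profiles reproducing the unweighted winner set is exactly the right bookkeeping.
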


As mentioned in the introduction, the technical-report version of
this paper~\cite{hem-hem-rot:t:online-bribery}
includes some additional discussion of
the model, of the flexibility of bribery, and of the fact
that bribery is not necessarily always modeling
immoral/evil acts.

\section{General Upper Bounds and Matching Lower Bounds}\label{sec:general-upper-bounds}
Even for election systems with simple winner problems, the best
general upper bounds that we can prove for our problems reflect an
extremely high level of complexity.

One might
wonder whether
that merely is a weakness in our upper-bound
proofs.  However, in each case, we provide a matching completeness
result
proving that these really are the hardest problems in the
classes their upper bounds put them in.

However, in Section~\ref{sec:online-brib-spec}, we will see that for
many specific natural, important systems, the complexity is
tremendously lower than the upper bounds, despite the fact that the
present section shows that there exist systems that meet the upper
bounds.

\subsection{The General Upper Bound, Without Limits on the Number of Bribes}\label{sec:general-upper-bounds-unlimited}
This section covers upper bounds for
the case when any bribe limit/bribery budget is
passed in through the input---not hardwired into the problem
itself.
\begin{theorem}\label{t:pspace-general}
    \begin{enumerate}
  \item
    For each election system $\cale$ whose winner problem in the unweighted
  case is in
  polynomial time (or even in polynomial space),
each of the problems
  $\onlinesystemb{\cale}$,
  $\onlinesystemdb{\cale}$,
$\onlinesystempb{\cale}$, and 
$\onlinesystemdpb{\cale}$
is in $\pspace$.
\item
      For each election system $\cale$ whose winner problem in the weighted
  case is in
  polynomial time  (or even in polynomial space), each of the problems
$\onlinesystemwb{\cale}$,
$\onlinesystemdwb{\cale}$,
$\onlinesystempwb{\cale}$, and
$\onlinesystemdpwb{\cale}$
is in $\pspace$.
\end{enumerate}
\end{theorem}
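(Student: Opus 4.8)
The plan is to give a single recursive deterministic procedure that evaluates the game tree underlying an OBS in polynomial space, handling all eight online bribery problems, and both the polynomial-time and polynomial-space winner-problem hypotheses, uniformly. The crucial structural observation is that an OBS $(C,V,\sigma,d,k)$ determines a finite tree of alternating moves: at the current voter $u$ the briber makes an \emph{existential} move (either decline to bribe, or bribe $u$ into casting some vote $\tau$, provided the relevant limit or budget is not violated); then nature makes a \emph{universal} move revealing the intended vote of the next voter in $V_{>u}$; then the briber makes an existential move regarding that voter; and so on, alternating, until the list $V_{>u}$ is exhausted. The votes already cast by the voters in $V_{<u}$ are fixed by the input and are simply carried along. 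The depth of this tree is the number of voters at $u$ and after $u$, which is bounded by the length of the input and hence is polynomial.

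I would then define a recursive procedure $\mathit{Eval}$ taking as arguments the votes committed so far (starting from the fixed votes of $V_{<u}$), the remaining budget (for unpriced cases the number of bribes still permitted, for priced cases the cost still available), and a pointer into $V_{>u}$ indicating whose turn is next. At a \emph{leaf} (no voters remain) the procedure assembles the complete collection of votes $U'$, invokes the assumed winner-determination algorithm to compute $W_{\cale}(C,U')$, and returns \emph{true} exactly if the briber's goal holds---in the constructive case that $W_{\cale}(C,U')\cap\{c\condition c\geq_\sigma d\}\neq\emptyset$, and in the destructive case that $W_{\cale}(C,U')\cap\{c\condition d\geq_\sigma c\}=\emptyset$. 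At an existential (briber) node the procedure returns \emph{true} if some legal briber choice yields a \emph{true} subtree; at a universal (nature) node it returns \emph{true} only if every revealed vote yields a \emph{true} subtree. The answer to the instance is $\mathit{Eval}$ evaluated at the root, which (by the problem definition) is an existential node for the decision about $u$.

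The key point for the space bound is that although each node branches over exponentially many votes (the total orders over $C$, or for approval the bit-vectors in $\{0,1\}^{\|C\|}$), the procedure enumerates these candidate votes one at a time---say in lexicographic order of their encodings---reusing the same polynomial-size workspace for each, and keeping only the running disjunction (at existential nodes) or conjunction (at universal nodes) of the outcomes. Thus no single frame stores more than polynomially much data, and the recursion stack is only as deep as the number of voters, which is polynomial; the weights and prices, though coded in binary, add only polynomially sized counters per frame (for priced cases a running sum of binary-coded prices compared against $k$, and for unpriced cases a count of bribes used against $k$). The total space is therefore polynomial. Finally, the leaf step stays within polynomial space precisely by the theorem's hypothesis that the winner problem---the weighted winner problem for the four weighted problems---is in polynomial time or polynomial space, with the weighted cases interpreted through the weighted version of $\cale$ fixed in Section~\ref{sec:basics}.

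I expect the only real subtlety, and hence the step I would be most careful about, to be the space accounting under the exponential branching: one must make explicit that the set of possible votes at each node is swept sequentially with bounded reused workspace rather than stored, so that it is the polynomial \emph{depth}, and not the exponential \emph{width}, of the game tree that governs the space usage. Once that is pinned down the result follows. I would note as an aside that in the special case where the winner problem is in polynomial time one could alternatively observe that the same tree is evaluated by a polynomial-time alternating Turing machine and invoke $\mathrm{APTIME}=\pspace$; but the recursive formulation is preferable here since it also covers the polynomial-space winner hypothesis uniformly.
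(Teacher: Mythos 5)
Your proof is correct and takes essentially the same approach as the paper's: the paper likewise obtains the $\pspace$ upper bound by viewing the instance as the natural polynomial-depth alternation of existential (briber) and universal (revealed-vote) moves and evaluating that game tree in polynomial space, invoking the assumed (polynomial-time or polynomial-space) winner-determination algorithm at the leaves. Your explicit depth-first recursion with sequential, workspace-reusing enumeration of the exponentially many possible votes at each node, and your closing remark that the polynomial-time-winner case could instead be dispatched via $\mathrm{APTIME}=\pspace$, are exactly the standard realization of that idea.
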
  
The proof is made available through 
the technical-report version of this paper~\cite{hem-hem-rot:t:online-bribery}.

\subsection{The General Upper Bound, With Limits on the Number of Bribes}\label{sec:general-upper-bounds-limited}

Turning to the case where in the problem the number of bribes has
a fixed bound of $k$, these problems fall into the $\pilevel{2k+1}$
level of the polynomial hierarchy.  That is not immediately obvious.
After all, even when one can bribe at most $k$ times, one still for
each of the current and future voters seems to need to explore the
one-bit-per-voter decision of whether to bribe the voters (plus in
those cases where one does decide to bribe, one potentially has to
explore the exponential---in the number of candidates---possible votes
to which the voter can be bribed).  On its surface, for our problems,
that would seem to say that the number of alternations between
universal and existential moves that the natural polynomial-time
alternating Turing program for our problem would have to make is about
the number of voters---a bound that would not leave the problem in any
fixed level of the polynomial hierarchy, but would merely seem to put
the problem in $\pspace$.

So these problems are cases where even obtaining the stated upper
bound is interesting and requires a twist to prove.  The twist is as follows.
On the surface the exploration of these problems has an unbounded
number of alternations between universal and existential states in
the natural, brute-force alternating Turing machine program.  But
for all but $k$ of the existential guesses on each accepting
path, the guess is a boring one,
namely, we guess that regarding that voter we don't bribe.
We will show, by proving a more general result about alternating
Turing machines and restrictions on the structure of their maximal
existential move segments along accepting paths, a 
$\pilevel{2k+1}$ upper bound on our sets of interest.
In some sense, in terms of being charged
as to levels of the polynomial hierarchy, we will be showing
that if for
a certain collection of 0-or-1 existential
decisions one on each accepting path chooses 0 all but a fixed
number of times (although for the other times one
may then make many more nondeterministic
choices), one can manage to in effect
not be charged at all for the guessing acts
that guessed~0.

We know of only one result in the literature that is anything like
this.  That result, which also came up in the complexity of online
attacks on elections, is a result of Hemaspaandra, Hemaspaandra, and
Rothe~\cite{hem-hem-rot:j:online-voter-control}, where in the context
not of bribery but of voter control they showed that for
each fixed $k>0$ it holds that, for each polynomial-time alternating Turing
machine $M$ whose alternation blocks are each one bit long and that for at
most $k$ of the existential blocks guess a zero, the language accepted
by $M$ is in $\conp$.

In contrast, in the present
paper's case we are in a far more complicated situation, since in bribery our
existential blocks are burdened not just by 1-bit bribe-or-not
decisions, but for the cases when we decide to try bribing, we need to
existentially guess what bribe to do.  And so we do not stay in
$\conp$ regardless of how large
 $k$ is, as held in that earlier case.
 But we show that we can at least limit the growth to at most $2k+1$
 alternating quantifiers---in particular, to the class $\pilevel{2k+1}$.
And since we
later provide problems of this sort that are complete for
$\pilevel{2k+1}$, our $2k+1$ is optimal unless the polynomial
hierarchy collapses.

We will approach this in two steps.  First,
as Section~\ref{sec:result-about-altern},
we will prove the result
about alternating Turing machines.  And then,
as Section~\ref{sec:upper-bound-results},
we will apply that to
online bribery in the case of
only globally fixed numbers of bribers being allowed.

\subsubsection{A Result about Alternating Turing Machines}\label{sec:result-about-altern}

Briefly put, an alternating Turing machine~\cite{cha-koz-sto:j:alternation}
(aka~ATM) is a generalization of
nondeterministic and conondeterministic computation.
We will now briefly review the basics
(see~\cite{cha-koz-sto:j:alternation} for a more complete
treatment).
An ATM can make both universal and existential choices.  For a
universal ``node'' of the machine's action to evaluate to true, all
its child nodes (one each for each of its possible choices) must
evaluate to true.  For an existential ``node'' of the machine's action
to evaluate to true, at least one of its child nodes (it has one child
node for each of its possible choices) must evaluate to true.  A
leaf of the computation tree (a path, at its end) is said to
evaluate to true if the path halted in an accepting state and is
said to evaluate to false if the path halted in a nonaccepting state.
(As our machines are time-bounded, all paths halt.)  Without loss of
generality,
in this paper we assume that each universal or existential node has
either two children (namely, does a universal or existential split
over the choices 0 and 1; we will often call this a 
``1-bit move'') or has exactly one child (it does a
trivial/degenerate
universal or existential choice of an element from the one-element set
$\{0\}$; we will often call this a ``0-bit move'').  
The latter case is in effect a deterministic move, except
allowing degenerate $\forall$ steps of that sort will let us put a
``separator'' between otherwise contiguous $\exists$ computation
segments.
Of course, long existential guesses can be done in this model,
for example by guessing a number of bits sequentially.
An
ATM accepts or rejects based on what its root
node evaluates to (which is determined inductively in the way described
above).

\begin{definition}\label{d:weight}
  Consider a path $\rho$ in the tree of an ATM\@.  The \emph{weight}
  of that path is as follows.  Consider all maximal segments of
  existential nodes along the path.  (As mentioned above, we may
  without loss of generality, and do, assume that each nonleaf node is
  $\exists$ or $\forall$, although perhaps a degenerate such node
  in the way mentioned above).
  The weight of path $\rho$
  is its number of maximal existential segments such that the concatenation
  of the bits guessed in that segment is not the 1-bit string 0.
\end{definition}

Figure~\ref{f:left}
gives an illustration of this.  In the figure,
the illustrated path (the leftmost one at the left edge of the tree)
has weight 0; it has three maximal existential segments, but each
is of length one and makes the guess 0.
\begin{figure}[!t]
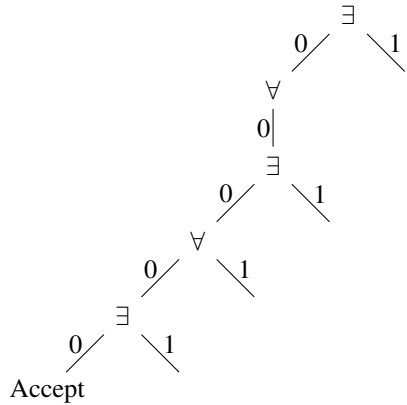

  \small
   \ctikzfig{fig-left-path}
      \caption{\label{f:left}A weight 0 path in the tree of an
        ATM\@.}
\end{figure}

With this definition in hand, we can now state our key theorem showing
that limited weight for ATMs simplifies the complexity of the
languages accepted.  The result is one where one may go back and forth
between thinking it is obvious and thinking it is not obvious.  In
particular, note that even on accepting paths, which must be of weight
at most $k$, it is completely possible that the number of alternations
between existential and universal nodes may be far greater than $k$
and may be far greater than $2k+1$, and indeed may grow unboundedly as
the input's size increases.  What the theorem below is saying is that
despite that, machines with bounded weight on their accepting paths
still accept only $\pilevel{2k+1}$ sets.  

\begin{theorem}\label{t:2k+1}
  Let $k\geq 0$ be fixed.  Each polynomial-time ATM $M$ such that
  on no input does 
  $M$ have an accepting path of weight strictly
  greater than $k$
  accepts a
  language in $\pilevel{2k+1}$.
\end{theorem}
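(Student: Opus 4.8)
The plan is to prove this by induction on $k$, peeling off one ``interesting'' maximal existential segment at each step and paying for it with exactly two quantifier alternations (a $\forall$ followed by an $\exists$), so that accepting-path weight at most $k$ costs $2k+1$ alternation blocks. Throughout, fix a polynomial $p$ bounding the length of every path of $M$, and recall the standard correspondence between $\pilevel{j}$ and polynomial-time ATMs whose computation begins with a universal block and has at most $j$ alternation blocks.

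For the base case $k=0$, the hypothesis says that along every accepting path each maximal existential segment guesses exactly the one-bit string $0$. I would compare $M$ with its \emph{all-boring} variant, obtained by forcing every existential node to its $0$-child (so existential nodes become deterministic). If $M$ accepts $x$, it has a proof subtree (one child kept at each existential node, all children at each universal node) all of whose leaves accept; by the weight hypothesis the kept existential child is always the $0$-child, so this subtree is precisely the computation of the all-boring machine, which therefore accepts $x$; the converse is immediate since any accepting all-boring computation is itself a valid proof subtree for $M$. The all-boring machine branches only universally, so it is a polynomial-time $\conp$ machine and $L(M)\in\conp=\pilevel{1}=\pilevel{2\cdot 0+1}$.

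For the inductive step, assume the claim for $k-1$. I would establish the equivalence
\[ x\in L(M)\iff \forall \rho\;\exists (j,w)\;\Phi(x,\rho,j,w), \]
where $\rho$ ranges over full root-to-leaf walks in which the adversary fixes every universal branch and every existential node is \emph{tentatively} resolved to its boring $0$-child, and $(j,w)$ encodes the claim ``the first interesting segment is the maximal existential segment beginning at position $j$ of $\rho$, and that segment's guessed string is $w$'' (with a distinguished value meaning ``no interesting segment, so $\rho$ must already end in an accepting leaf''). The polynomial-time predicate $\Phi$ rebuilds the path agreeing with $\rho$ up to $j$, overrides the boring $0$ by the full segment-guess $w$ at that segment, and then asks whether the \emph{subtree} hanging below the segment is accepted, viewed as an instance of the same problem for $M$ started from that configuration. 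Because we divert only at the \emph{start} of a maximal existential segment, and because the degenerate-$\forall$ ``separator'' convention guarantees that the node immediately after any maximal existential segment is universal or a leaf, the maximal existential segments of the subtree's paths are exactly the later segments of $M$'s paths; hence an accepting path of the subtree extends to an accepting path of $M$ of weight at most $k$ whose discarded prefix already has weight at least $1$, so the subtree's accepting paths all have weight at most $k-1$. By the inductive hypothesis the subtree-acceptance question is in $\pilevel{2k-1}$, and prefixing it with $\forall\rho\,\exists(j,w)$ (whose trailing $\exists$ does not merge with the leading $\forall$ of a $\pilevel{2k-1}$ predicate) gives $2+(2k-1)=2k+1$ blocks beginning with $\forall$, i.e.\ $\pilevel{2k+1}$. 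Note $\rho$ and $(j,w)$ are polynomially bounded, so these are legitimate polynomially-length-bounded quantifiers.

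The main obstacle, and the part I expect to require the most care, is defining the ``universal region'' cleanly and proving the displayed equivalence in both directions, since whether a given existential node is boring or interesting is the \emph{existential} player's choice yet must be reconciled with the quantifier order (the adversary's $\forall\rho$ is committed before the $\exists(j,w)$). For the forward direction I would take a proof subtree $S$ for $M$ with at most $k$ interesting segments per path and, for each adversary walk $\rho$, follow $S$ against $\rho$'s universal choices: before its first interesting segment $S$ plays only $0$, so it coincides with $\rho$ up to that segment (or reaches an accepting leaf, the first disjunct), letting $E$ set $j,w$ to that segment and then win the weight-$(k-1)$ subgame. For the reverse direction I would assemble the per-$\rho$ witnesses $(j,w)$ into a single strategy that plays boring until the designated segment, guesses $w$, and recurses. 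A clean alternative packaging, matching the paper's framing, is to unroll the induction into the single prefix $\forall\rho_1\exists(j_1,w_1)\cdots\exists(j_k,w_k)\forall\rho_{k+1}$ and, where one instead guesses the combinatorial skeleton up front, to absorb the resulting polynomially many uniform questions via Fact~\ref{f:dtt-closure}; but the inductive presentation keeps the weight-decrement and segment-boundary bookkeeping most transparent.
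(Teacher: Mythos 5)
Your proof is essentially correct, but it takes a genuinely different route from the paper. You induct on $k$, peeling off the first interesting maximal existential segment with a $\forall\rho\,\exists(j,w)$ prefix and recursing on the suffix; the paper instead does everything in one shot: it defines a set $G$ of pairs $\pair{x,s}$, where $s$ is a length-$q(|x|)$ bit-vector with at most $k$ ones that \emph{externally} fixes which maximal existential segments are unrestrained (all others being forced to be single boring $0$-guesses, replaced by degenerate universal steps), observes that the resulting simulation is an ATM with at most $2k+1$ alternation blocks starting with $\forall$, and then reduces $L$ to $G$ by a polynomial-time disjunctive truth-table reduction that lists all $O((q(|x|))^k)$ admissible vectors $s$, concluding via Fact~\ref{f:dtt-closure}. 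What each approach buys: the paper's externalization of the skeleton makes correctness nearly immediate---no strategies need to be assembled---at the price of invoking closure under $\dttred$; your induction needs only the standard quantifier characterization of $\pilevel{j}$ and makes the weight decrement explicit, but it concentrates all the difficulty in the two directions of your displayed equivalence, exactly where you say care is needed. Two details you should nail down. First, to apply the inductive hypothesis you must package ``$M$ started from that configuration'' as a single ATM $M'$ whose input encodes $x$, the universal choices up to position $j$, and $w$, and which \emph{deterministically validates} that prefix (in particular that the designated segment is genuinely interesting, i.e., $w$ is not the one-bit string $0$) before simulating; otherwise $M'$ would have inputs on which accepting paths can still have weight $k$, violating the hypothesis of the induction. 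Second, in the reverse direction the per-$\rho$ witnesses need not be consistent---two walks agreeing up to position $j$ may designate different divergence points---so ``play boring until the designated segment'' is not yet a strategy; the fix is to observe that the validity of a non-distinguished witness $(j,w)$ depends only on $\rho$'s prefix up to $j$, so the briber-player may safely diverge at the \emph{first} node admitting \emph{some} winning divergence, and if no such node is ever reached, the all-boring walk actually traced must (by its own witness) end at an accepting leaf. With those two points filled in, your argument is a sound and self-contained alternative to the paper's proof, and your closing remark correctly anticipates the paper's own packaging.
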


\begin{proof}
  Let $k\geq 0$ be fixed.  Let $L$ be the language accepted by
  polynomial-time ATM $M$ that has the property that each of its
  accepting paths has weight at most $k$.  Our goal is to prove that
  $L \in \pilevel{2k+1}$.  

  We will do so by proving that there is a set $G \in \pilevel{2k+1}$
  such that $L \dttred G$, i.e., $L$ polynomial-time disjunctively
  truth-table reduces to $G$.  By Fact~\ref{f:dtt-closure},
  it follows that $L \in \pilevel{2k+1}$.

  Let $q$ be a nondecreasing polynomial that upper-bounds the running time
  of $M$.  Let $\pair{\cdot,\cdot}$ be a standard pairing function, i.e.,
  a polynomial-time computable, polynomial-time invertible bijection between
  $\sigmastar \times \sigmastar$ and $\sigmastar$.
  Recall that every step of our ATM $M$ involves either a 0-bit existential
  move
  (which we'll think of basically as existentially choosing 0 from the
  choice palette set $\{0\}$)
  or a 1-bit existential move (which, recall, involves choosing
  one element from the choice palette set $\{0,1\}$ with the machine
  enforcing an ``or'' over the two children thus reached)
   or a 0-bit universal move
  (which we'll think of basically as universally choosing 0 from the
  choice palette set $\{0\}$)
  or a 1-bit universal move.  And the 1-bit moves involve successor
  states hinged on whether the move-choice is a 0 or a 1.
 (As
  Turing machines are standardly defined, there can be (one or multiple)
  successor states to a given state, hinged on a (degenerate or
  nondegenerate) choice.)

  $G$ will be the set of all $\pair{x,s}$ such that all of the 
  list of conditions that we will give below hold relative to $x$ and $s$.
  The intuition here is that $s$ is a
  bit-vector whose $i$th bit controls how the $i$th maximal
  existential segment is handled: If that bit is a~1, the segment moves
  forward unrestrained, but if that bit is a~0, then we expect and
  require (and cut off that part of the tree otherwise) the maximal
  existential segment to be a single existential step (either guessing
  a bit from $\{0,1\}$ or the allowed but superfluous existential step
  of guessing a bit from the one-element set $\{0\}$) and we basically
  cut that step out of the tree by replacing it by a trivially 
  universal step.
  Returning to our defining of $G$,
  the set $G$ will be all $\pair{x,s}$ such that all 
  the following claims hold.
    \begin{enumerate}
  \item $x \in \sigmastar$ and $s \in \{0,1\}^{q(|x|)}$.

  \item The number of ``1''s in the bit-string $s$ is at most $k$.
  \item $M$ accepts when we simulate it on input $x$ but with the
    following changes in the machine's action.  

    As one simulates $M$ on a given path, consider the first
    existential node (if any) that one encounters.

    If the first bit of $s$ is
    a 1, then for that node we will directly simulate it, and
    on all paths that follow from this one, on all the following
    existential nodes (if any) that are in an unbroken segment of
    existential nodes from this one, we will similarly directly
    simulate them.  On the other hand, if the first bit of $s$ is a~0,
    then (a)~if it is the case that if the current node
    makes the choice 0 then the node that follows it is an existential
    node, then the current path halts and rejects (because something that
    $s$ is specifying as being
    a maximal existential segment consisting of a single~0
    clearly is not); and (b)~if~(a) does not hold (and so the
    node that follows if we make the choice~0 is either
    universal or a leaf), then do not take
    an existential action at the current node but rather implement it
    as a degenerate universal step (namely, a ``$\forall$'' guess over
    one option, namely, 0, matching as to next state and so on
    whatever the existential node would have done on the choice
    of~0).

    If the path we are simulating didn't already end or get cut off
    during the above-described handling of its first, if any,
    maximal existential segment, then continue on until we hit the
    start of its second existential segment.  We handle that exactly
    as described above, except now our actions are controlled not by
    the first bit of $s$ but by the second.

    And similarly for the third maximal existential segment, the fourth,
    and so on.

    All other aspects of this simulation are unchanged from $M$'s own
    native behavior.
      \end{enumerate}

  Note that $G\in\pilevel{2k+1}$.  Why?  Even in the worst of cases for us,
  the computation of $M$ starts with a $\forall$ block and then has $k$
  $\exists$ blocks each separated by a $\forall$ block; and then we
  finish with a $\forall$ block.  But then
  our ATM as it does that simulation starts with $\forall$
  and has $2k$ alternations of quantifier type, and thus has $2k+1$ alternation
  blocks with $\forall$ as the leading one.  And so by
  Chandra, Kozen, and Stockmeyer's~\cite{cha-koz-sto:j:alternation}
  characterization of the languages accepted by ATMs with that leading
  quantifier and that number of alternations, this set is in
  \pilevel{2k+1}.

  Finally, we argue that $L \dttred G$.  In particular, note that 
  $L \dttred G$ via the reduction that on input $x$ generates
  every length $q(|x|)$ bit-string having less than or equal to $k$
  occurrences of the bit~1, and as its $\dttred$ output list outputs
  each of those paired with $x$.  This list is easily generated
  and is polynomial in size.  In particular, the number of
  pairs in the list is clearly at most
  $\sum_{0\leq j \leq k} {q(|x|) \choose j}
  \leq \sum_{0\leq j \leq k}  (q(|x|))^j \leq (k+1) (q(|x|))^k$.
  That completes the proof.~\end{proof}

In the above, we focused on maximal existential guess sequences, and
limiting the number of those, on accepting paths, whose
bit-sequence-guessed was other than the string 0.  So we barred from
accepting paths any maximal existential guess sequences that contain a
1 and any that have two or more bits.  We mention in passing that we
could have framed things more generally in various ways.  For example,
we could have made each maximal existential guess be of a fixed
polynomial length and could have defined our notion of a ``boring''
guess sequence not as the string ``0'' but as a string of 0's of
exactly that length.  The $\pilevel{2k+1}$ upper bound holds also in
that setting, via only slight modifications to the proof.

\subsubsection{The Upper-Bound Results Obtained by Applying 
  the Previous Section's Result about Alternating Turing Machines%
}\label{sec:upper-bound-results}

\begin{theorem}\label{t:parameterized-case-upper-bounds}
    \begin{enumerate} 
  \item\label{p:k-uw}
For each $k \in \{0,1,2,\dots\}$, and for 
each election system $\cale$ whose winner problem in the unweighted
  case is in
  polynomial time,\footnote{Unlike
    Theorem~\ref{t:pspace-general}, we cannot allow the winner problem
    here to be in $\pspace$ and argue that the rest of the theorem
    holds unchanged.  However, we can allow the winner problem here to
    even be in $\np\cap\conp$, and then the rest of the theorem holds
    unchanged.  The key point to notice to see that that holds is---as
    follows immediately from the
    fact that
    $\np^{\np\cap\conp} = \np$~\cite{sch:j:low}---that
    for each $k \geq 0$, 
    $\np \cap \conp$ is $\pilevel{2k+1}$-low, i.e.,
    that $\left({\pilevel{2k+1}}\right)^{\np\cap\conp} =  \pilevel{2k+1}$.}
 each of the problems
  $\onlinesystembk{\cale}{k}$,
  $\onlinesystemdbk{\cale}{k}$,
$\onlinesystempbk{\cale}{k}$, and 
$\onlinesystemdpbk{\cale}{k}$
is in $\pilevel{2k+1}$.
\item\label{p:k-w}
For each $k \in \{0,1,2,\ldots\}$, and for 
each election system $\cale$ whose winner problem in the weighted
  case is in
  polynomial time,\footnote{As in the case of
    the immediately preceding footnote,
    the rest of the theorem remains unchanged even if we relax
    the ``polynomial time'' to instead be ``$\np\cap\conp$.''}
each of the problems
$\onlinesystemwbk{\cale}{k}$,
$\onlinesystemdwbk{\cale}{k}$,
$\onlinesystempwbk{\cale}{k}$, and
$\onlinesystemdpwbk{\cale}{k}$
is in $\pilevel{2k+1}$.
\end{enumerate}
\end{theorem}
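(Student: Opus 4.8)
The plan is to show that each of the eight problems can be decided by a polynomial-time ATM all of whose accepting paths have weight at most $k$ (in the sense of Definition~\ref{d:weight}); Theorem~\ref{t:2k+1} then immediately yields membership in $\pilevel{2k+1}$. The ATM I would build simply plays out the online-bribery game in the obvious alternating fashion. Its configuration carries the OBS together with the votes decided so far and a running count of how many bribes have been used (and, in the priced cases, a running total of their cost). Starting from the voter $u$ named in the OBS, the machine makes an \emph{existential} move encoding the briber's decision about $u$; then, for each subsequent voter $w$ in $V_{>u}$ in turn, it first makes a \emph{universal} move that reveals $w$'s intended vote (ranging over all total orders over the candidates in $C$, or over all bit-vectors in the approval case) and then an existential move encoding the briber's decision about $w$. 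When all voters have been processed, the machine deterministically assembles the final vote collection $U'$, computes the winner set $W_{\cale}(C,U')$ in polynomial time using the assumed polynomial-time winner algorithm, and accepts exactly if the appropriate (constructive or destructive) goal holds. Because each existential decision about a voter $w$ occurs downstream of the revelation of $w$'s vote but upstream of all later revelations, the machine faithfully models a briber who reacts to the currently revealed vote but must commit before seeing any future vote.

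The crux is the encoding of each per-voter decision so that the weight of a path equals the number of bribes it actually performs. I would encode each decision as a single existential segment that begins with one bit: guessing $0$ means ``do not bribe $w$,'' in which case the segment is exactly the one-bit string $0$ and so is \emph{boring} in the sense of Definition~\ref{d:weight}; guessing $1$ means ``bribe $w$,'' after which the machine existentially guesses the replacement vote (a total order, or a bit-vector), making that segment's guessed bit-string begin with $1$ and hence be non-boring. The universal revelation between consecutive voters serves as the separator that makes each voter's decision its own maximal existential segment. Thus along any path the weight is precisely the number of voters that path chooses to bribe. Since the machine rejects any path on which the bribe count exceeds $k$ (equivalently, one may simply cut off the $(k{+}1)$st bribe), every accepting path bribes at most $k$ times and therefore has weight at most $k$, as required by Theorem~\ref{t:2k+1}.

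The priced and weighted variants add only bookkeeping that lives inside the polynomial-time, deterministic portions of the computation: in a priced variant the machine maintains the running cost (prices are binary integers, so the arithmetic and the budget comparison are polynomial time) and rejects as soon as the budget is exceeded, while in a weighted variant the final winner set is computed via the assumed polynomial-time algorithm for the \emph{weighted} winner problem. The $[k]$ bound on the number of bribes is in force in all eight variants, so the weight-$k$ analysis is unchanged. The one genuinely substantive point---and the reason the upper bound is not simply $\pspace$---is exactly that the number of quantifier alternations made by this ATM grows with the number of voters; what rescues us is that all but at most $k$ of the existential segments on an accepting path are the boring guess $0$, so the heavy lifting has already been done in Theorem~\ref{t:2k+1} and the present argument reduces to verifying the weight bound and the polynomial-time simulation. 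I would finally note that, using the $\pilevel{2k+1}$-lowness of $\np\cap\conp$ recorded in the theorem's footnotes, the very same construction goes through when the deterministic winner algorithm is replaced by an $\np\cap\conp$ oracle.
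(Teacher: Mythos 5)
Your proposal is correct and takes essentially the same route as the paper's own proof: both implement the online-bribery game on a polynomial-time ATM in which each per-voter decision is a one-bit existential guess (0 meaning ``do not bribe,'' 1 meaning ``bribe'' followed by guessing the replacement vote), observe that the $[k]$ bound forces every accepting path to have weight at most $k$, and then invoke Theorem~\ref{t:2k+1}; the paper merely organizes the eight cases slightly differently (proving the priced, weighted case first and deriving the unpriced weighted cases via Proposition~\ref{p:reductions}, then rerunning the construction for the unweighted part), whereas you run the construction directly for each variant. The one implementation detail the paper spells out that you leave implicit is that all deterministic bookkeeping must be coded as \emph{degenerate universal} steps (degenerate existential steps would merge into and lengthen the ``do not bribe'' segments, destroying the weight bound), but your remark that the universal revelations act as separators shows you have the right picture.
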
  

\begin{proof}
Let $k>0$ be fixed.
Let us start by arguing that 
$\onlinesystempwbk{\cale}{k} \in 
\pilevel{2k+1}$.

As noted (for the case without the bound of~$k$)
in Section~\ref{sec:online-brib-sequ}, what is really going on here
is about alternating quantifiers.  Consider a given input to the problem
$\onlinesystempwbk{\cale}{k}$.
Let the voter under consideration (i.e., $u$) in the focus moment of
that problem
just for
this paragraph be referred to as  $u_1$, and let the ones coming
after it be called, in the order they occur, $u_2$, $u_3$,~$\dots$, $u_\ell$. 
What the membership problem is in essence asking is whether
there \emph{exists} an allowable (within both the price budget and
the global limit of $k$ allowed bribes)
choice as to whether to bribe $u_1$ (and if the decision is to bribe,
then whether there  \emph{exists} a vote to which to bribe $u_1$)
such that, \emph{for each} vote that $u_2$ may then be revealed to have,
there 
\emph{exists} an allowable (within both the price budget and
the global limit of $k$ allowed bribes)
choice as to whether to bribe $u_2$ (and if the decision is to bribe,
then whether there  \emph{exists} a vote to which to bribe $u_2$)
such that, \emph{for each} vote that $u_3$ may then be revealed to
have,~$\dots$,~such that
there \emph{exists} an allowable (within both the price budget and
the global limit of $k$ allowed bribes)
choice as to whether to bribe $u_\ell$ (and if the decision is to bribe,
then whether there  \emph{exists} a vote to which to bribe $u_\ell$)
such that
$W_{\cale}(C,U') \cap \{c \condition c \geq_{\sigma} d\} \neq
\emptyset$ (recall that that inequality says that 
the winner set includes some candidate
that the briber likes at least as much as the briber
likes $d$; $U'$ is here representing the vote set after all
the voting/bribing, as per Section~\ref{sec:online-brib-sequ}'s definitions).

Note that for at most $k$ of the choice blocks associated with
$u_1, u_2, \dots$ can we make the choice to bribe.
 (In fact,
if we have already done bribing of one or more
voters in $V_{< u}$, then our remaining number of allowed
bribes will be less than $k$.)  Keeping that in mind, imagine
implementing the above paragraph's alternating-quantifier-based
algorithm on a polynomial-time ATM\@.  In our model, every step is either
a universal or an existential one, and let us
program up all deterministic computations that are part of the above
via degenerate universal steps.  (We do that rather than
using degenerate existential steps since those degenerate
existential steps would interact fatally with our definition
of maximal existential sequence; we really need those places
where one guesses that one will not bribe to be captured as a
maximal existential sequence of length one with guess bit~0; this
comment is quietly using the fact that when making a 1-bit choice
as to whether to bribe we associate the choice 1 with ``yes bribe
this voter'' and 0 with ``do not bribe this voter.'')
In light of that and the fact that we know that
the weighted winner problem of election system $\cale$ is in~$\p$, the
limit of~$k$
ensures that no accepting path will have weight greater than $k$.
And so by Theorem~\ref{t:2k+1}, we have that
$\onlinesystempwbk{\cale}{k} \in 
\pilevel{2k+1}$.

By the exact same argument,
except changing the test at the end to
$W_{\cale}(C,U') \cap \{c \condition d \geq_{\sigma} c\} =
\emptyset$,
we have that 
$\onlinesystemdpwbk{\cale}{k} \in 
\pilevel{2k+1}$.

From these two results, it follows by Proposition~\ref{p:reductions}
that $\onlinesystemwbk{\cale}{k} \in \pilevel{2k+1}$ and
$\onlinesystemdwbk{\cale}{k} \in \pilevel{2k+1}$.

That completes the proof of part~\ref{p:k-w} of the theorem.
Now, we cannot simply invoke
Proposition~\ref{p:reductions}
to claim that part~\ref{p:k-uw} holds.  The reason is that
part~\ref{p:k-uw}'s hypothesis about the winner problem merely
puts the unweighted winner problem in $\p$, but 
the proof we just gave of part~\ref{p:k-w} used the fact
that for that part we could assume that the
weighted winner problem is in~$\p$.

However, the entire construction of this proof works perfectly
well in the unweighted case, namely, we are only given that
the unweighted winner problem is in $\p$, but the four problems
we are studying are the four unweighted problems of
part~\ref{p:k-uw} of the theorem statement.  So we have that
 each of the problems
  $\onlinesystembk{\cale}{k}$,
  $\onlinesystemdbk{\cale}{k}$,
$\onlinesystempbk{\cale}{k}$, and 
$\onlinesystemdpbk{\cale}{k}$
is in $\pilevel{2k+1}$.  That completes the proof of the theorem.~%
\end{proof}

\subsection{Matching Lower Bounds}\label{sec:match-lower-bounds}
For each of the $\pspace$ and $\pilevel{2k+1}$ upper bounds
established so far in this section, we can 
in fact establish a matching lower bound.  We show
that by, for each, proving that there is
an election system, with a polynomial-time winner
problem, such that the given problem is polynomial-time
many-one hard for the relevant
class (and so, in light of the upper-bound results,
is polynomial-time many-one complete for the relevant class).

\begin{theorem}\label{t:lower}
    \begin{enumerate}
  \item \label{p:lower-unbounded}
    For each problem $I$ from this list of problems:  $\onlinesystemb{\cale}$,
$\onlinesystemdb{\cale}$,
$\onlinesystempb{\cale}$, 
$\onlinesystemdpb{\cale}$,
$\onlinesystemwb{\cale}$,
$\onlinesystemdwb{\cale}$,
$\onlinesystempwb{\cale}$, and
$\onlinesystemdpwb{\cale}$,
there exists an (unweighted)  election system $\cale$, whose
winner problem in both the unweighted case 
and the weighted  case is in
polynomial time,
such that $I$ is $\pspace$-complete.
\item \label{p:lower-2}
For each $k \in \{0,1,2,\ldots\}$, and for 
each problem $I$ from this list of problems:
    $\onlinesystembk{\cale}{k}$,
  $\onlinesystemdbk{\cale}{k}$,
$\onlinesystempbk{\cale}{k}$, 
$\onlinesystemdpbk{\cale}{k}$,
$\onlinesystemwbk{\cale}{k}$,
$\onlinesystemdwbk{\cale}{k}$,
$\onlinesystempwbk{\cale}{k}$, and
$\onlinesystemdpwbk{\cale}{k}$,
there exists an (unweighted)  election system $\cale$, whose
winner problem in both the unweighted case 
and the weighted  case is in
polynomial time,
such that $I$ is $\pilevel{2k+1}$-complete.
\end{enumerate}
\end{theorem}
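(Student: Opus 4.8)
The plan is to prove only the hardness directions, since membership is already in hand: Theorem~\ref{t:pspace-general} supplies the $\pspace$ upper bounds and Theorem~\ref{t:parameterized-case-upper-bounds} supplies the $\pilevel{2k+1}$ upper bounds, so a matching lower bound yields completeness. For part~\ref{p:lower-unbounded} I would reduce from $\qbf$ (which is $\pspace$-complete), and for part~\ref{p:lower-2} from the canonical $\pilevel{2k+1}$-complete quantified boolean formula problem (formulas $\forall X_1\exists Y_1\cdots\forall X_{k+1}\,\psi$ with $2k+1$ alternating blocks and a leading $\forall$). In both cases I design a single artificial election system $\cale$ whose winner-determination procedure decodes each ballot as a truth assignment to one block of variables, evaluates a matrix formula, and makes the briber's most-preferred candidate a winner exactly when that formula is satisfied; because this decoding-and-evaluation is a fixed polynomial-time computation depending only on a polynomial-time-extractable function of the (possibly weighted) profile, $\cale$'s winner problem is in $\p$ in both the unweighted and the weighted case, as the theorem requires. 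I will carry out the reduction only for the unpriced, unweighted constructive problem $\onlinesystembk{\cale}{k}$ and its destructive sibling $\onlinesystemdbk{\cale}{k}$ (and the analogous unbounded problems), and then inherit hardness for the remaining six variants of the \emph{same} $\cale$ from Proposition~\ref{p:reductions} together with the transitivity of $\manyone$-hardness.

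The next step is the correspondence between the online game and the quantifier prefix. I lay the voters out in the order of the quantifier blocks, prefacing them with a single \emph{dummy} voter $u_1$ whose ballot never occurs in the matrix; this absorbs the leading existential move of the magnifying-glass moment, so that for part~\ref{p:lower-2} the first \emph{consequential} quantifier the game faces is universal, matching the leading $\forall$ of $\pilevel{2k+1}$. Each existential block is realized by one voter whom the briber is meant to bribe---a single bribe suffices to fix a whole block, since one ballot (a total order, or an approval vector) encodes many bits---while each universal block is realized by a voter whose revealed ballot the briber is meant to keep. Setting the bribe bound to $k$ (the number of existential blocks) and recalling that each future ballot is revealed \emph{before} the briber commits, the interleaving $\exists\,\forall\,\exists\,\forall\cdots$ of reveals and override-decisions reproduces exactly the dependency structure of the source formula: by the time the briber fixes $Y_i$ it has already seen $X_1,\ldots,X_i$.

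The main obstacle is precisely the ``Wild West'' phenomenon flagged in the introduction: nothing structural stops the briber from spending a bribe on a voter meant to be universal, and---as a small two-variable example already shows---such a swap can make the game true while the source formula is false, so a naive reduction is simply incorrect. My resolution is to build the protection into the matrix rather than into the election. I replace $\psi$ by $\psi' = \psi \wedge \bigwedge_i (\tau_i = 1)$, where the ballot of the $i$th existential voter carries, in addition to the assignment $\widehat{Y}_i$ consumed by $\psi$, a fresh \emph{tripwire} bit $\tau_i$, and the conjunct $\tau_i = 1$ depends on \emph{that ballot alone}. If the briber ever leaves an existential voter unbribed, the adversary controlling that ballot simply sets $\tau_i = 0$, falsifying $\psi'$ through a conjunct that no later move can repair. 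Thus leaving any existential voter to the adversary is immediately and \emph{non-compensably} fatal, which forces all $k$ bribes onto the existential voters; with the bound set to exactly $k$, no budget then remains to bribe a universal voter. The only play capable of meeting the goal is therefore the intended one, under which the game value equals the value of the source formula.

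With the forcing in place the two directions are routine: if the source formula is true the briber wins by bribing each existential voter, setting its tripwire and playing the formula's existential strategy for $\widehat{Y}_i$ as a function of the already-revealed $X_1,\ldots,X_i$; if it is false, the intended play is beaten by the adversary on $\psi$ and every other play is beaten by a tripwire, so no winning strategy exists. The destructive cases are handled symmetrically by wiring $\cale$ so that ``$\psi'$ satisfied'' corresponds to keeping every candidate the briber rates at or below $d$ out of the winner set, and the priced and weighted cases follow for the same $\cale$ from Proposition~\ref{p:reductions}. Combining each hardness result with its upper bound gives $\pspace$-completeness in part~\ref{p:lower-unbounded} and $\pilevel{2k+1}$-completeness in part~\ref{p:lower-2}. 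The delicate points I expect to spend the most care on are verifying that the tripwire conjuncts are genuinely \emph{local} (so that no downstream existential move can undo a sprung tripwire) and that every syntactically possible ballot---in particular every adversarial ballot on a universal voter---decodes to a legitimate assignment, so that the adversary's power is exactly universal quantification and nothing more.
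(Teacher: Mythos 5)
Your proposal is correct and follows essentially the same route as the paper's own proof (which appears in full only in the technical-report version, but whose strategy Section~\ref{sec:introduction} describes): hardness by reduction from quantified Boolean formulas via an artificial, formula-evaluating election system, with the ``Wild West'' freedom of the briber to spend bribes on unintended voters neutralized by transforming the formula so that only the intended assignment of quantifiers can succeed within the bribe budget---which is exactly what your tripwire conjuncts accomplish. Your surrounding scaffolding also matches the paper's setup: completeness by combining the lower bounds with the upper bounds of Theorems~\ref{t:pspace-general} and~\ref{t:parameterized-case-upper-bounds}, a dummy voter absorbing the leading existential move so the effective prefix begins with a universal block, and inheritance of hardness to the priced, weighted, and destructive variants of the same election system via Proposition~\ref{p:reductions} and transitivity of $\manyone$.
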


This is
the most involved and interesting proof in the paper, but
it also is quite long. So 
due to space the proof---and how to extend the proof to ensure
that all the constructed election systems are
simultaneously
(candidate-)neutral
and (voter-)anonymous---is made available through the
technical-report version of this paper~\cite{hem-hem-rot:t:online-bribery}.

\section{Online Bribery for Specific Systems}\label{sec:online-brib-spec}

In this section, we look at the complexity of online bribery
for various natural systems.
For both Plurality and
Approval, we show
that priced, weighted online bribery is
$\np$-complete
but that the election system's other three online bribery variants are in~$\p$.
This also shows that bribery can be harder than
online bribery for natural systems.
In addition, we provide complete dichotomy theorems that distinguish 
NP-hard from easy cases for
all our online bribery problems for scoring rules
and additionally we show that Veto
elections, even with three candidates, have even higher lower
bounds for weighted online bribery, namely $\p^{\np[1]}$-hardness.

The following theorem is useful for proving lower bounds for
online bribery for specific systems.

\begin{theorem}\label{t:m-to-b}
  \begin{enumerate}
\item
(``Regular'') manipulation reduces to corresponding online bribery.
(So, $\systemucm{\cal E}$ reduces to $\onlinesystemb{\cal E}$,
$\systemducm{\cal E}$ reduces to $\onlinesystemdb{\cal E}$,
$\systemwcm{\cal E}$ reduces to $\onlinesystemwb{\cal E}$, and
$\systemdwcm{\cal E}$ reduces to $\onlinesystemdwb{\cal E}$.)

\item
Constructive manipulation in the unique winner model
reduces to corresponding online destructive bribery.
(So, $\systemucm{\cal E}$ in the unique winner model
 reduces to $\onlinesystemdb{\cal E}$ and
$\systemwcm{\cal E}$ in the unique winner model
reduces to $\onlinesystemdwb{\cal E}$.)
\item
Online manipulation reduces to corresponding online priced bribery.
(So, $\onlinesystemucm{\cal E}$ reduces to $\onlinesystempb{\cal E}$,
$\onlinesystemducm{\cal E}$ reduces to $\onlinesystemdpb{\cal E}$,
$\onlinesystemwcm{\cal E}$ reduces to $\onlinesystempwb{\cal E}$, and
$\onlinesystemdwcm{\cal E}$ reduces to $\onlinesystemdpwb{\cal E}$.)
\end{enumerate}
\end{theorem}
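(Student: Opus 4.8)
The plan is to prove all three parts with a single reduction template and then specialize the choice of prices, weights, and the briber's preference $\sigma$ to each case. In every case the idea is to turn the manipulators of the source problem into exactly the voters the briber is able to bribe in the target online-bribery instance, and to encode the source problem's goal into the pair $(\sigma,d)$. Concretely, given a manipulation instance with candidate set $C$, nonmanipulative votes $V_0$, and $k$ manipulators, I would build an OBS whose candidate set is $C$, whose bribe limit is $k$, and whose $\sigma$ and $d$ are chosen so that the briber's goal set coincides with the manipulators' objective: in a constructive case I take $d$ and $\sigma$ with $\{c \mid c \geq_{\sigma} d\}$ equal to the set of candidates the manipulators want to see win, and in a destructive case I take $\{c \mid d \geq_{\sigma} c\}$ equal to the set they want to keep out of the winner set. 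The two directions of correctness are then uniform. For the forward direction, from a manipulation witness I have the briber bribe \emph{all} manipulator-voters into casting exactly the witness votes; since this spends at most $k$ bribes and overrides whatever those voters planned to cast, the resulting winner set $W_{\cale}(C,U')$ equals the witness outcome regardless of any adversarial revelations, so the briber's goal is met and the instance is a yes-instance. For the backward direction, I fix any single adversary, run the briber's winning strategy against it, and read off the final vote set $U'$: each manipulator-voter now carries either a briber-chosen vote (if bribed) or the adversary's vote (if not), and since manipulators may cast \emph{any} vote, this $U'$ is a legal manipulation assignment meeting the goal.

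For \textbf{Part 1} the source manipulation is non-online, so all nonmanipulative votes are fixed and known; I would place them in $V_{<u}$ (cast, unbribed) and make the $k$ manipulators be $u$ together with $V_{>u}$. The only point to check is that the universal quantifiers over the future manipulators' revealed votes are harmless: since the forward strategy bribes every manipulator, those revelations are overwritten and the $\forall$ blocks are vacuous, which is exactly why non-online manipulation (which has no adversary) maps faithfully onto the online model. The weighted and destructive variants are obtained by attaching the manipulators' weights to the corresponding voters and by switching the goal encoding as above. For \textbf{Part 3} the source is \emph{online} manipulation, where only a designated subset of the sequentially arriving voters is controllable. Here I keep the voters and their order and simulate ``controllable'' by price: each manipulator gets price $0$ and each nonmanipulator a prohibitive price, with total budget $0$. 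Then the briber can freely bribe precisely the manipulators and can never afford a nonmanipulator, so the briber's existential bribe choices on manipulators reproduce the coalition's moves while the revealed, unbribable nonmanipulator votes reproduce online manipulation's adversarial $\forall$ moves; the quantifier alternation of the two problems then lines up step for step (the vacuous $\forall$ that precedes each manipulator's overwritten vote collapses away). This is why online manipulation maps to \emph{priced} bribery rather than plain bribery.

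\textbf{Part 2} is the delicate one and is where I expect the real obstacle. The target is \emph{destructive} bribery but the source is \emph{constructive} manipulation in the \emph{unique}-winner model, i.e.\ making $p$ satisfy $W_{\cale}(C,U')=\{p\}$. My encoding takes $\sigma$ with $p$ on top and $d$ the next candidate, so that the forbidden set is $\{c \mid d \geq_{\sigma} c\}=C\setminus\{p\}$ and the destructive goal becomes $W_{\cale}(C,U')\subseteq\{p\}$. The unique-winner hypothesis on the source is exactly what makes the forward direction go through: a unique-winner witness gives $W_{\cale}(C,U')=\{p\}\subseteq\{p\}$, whereas a mere ``$p$ is \emph{a} winner'' witness need not satisfy $W\subseteq\{p\}$. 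The hard part is the backward direction, because $W\subseteq\{p\}$ is satisfied both by $W=\{p\}$ and by $W=\emptyset$, while a unique-winner manipulation witness requires $W=\{p\}$ exactly; thus the crux is to rule out the ``everybody loses'' outcome. I would handle this by observing that for the systems to which the theorem is applied (scoring rules and the like) the winner set is always nonempty, so $W\subseteq\{p\}$ forces $W=\{p\}$ and the equivalence is immediate; for full generality one must additionally argue (or build a gadget ensuring) that the empty-winner-set case cannot be the \emph{only} way the briber achieves $W\subseteq\{p\}$, and this empty-winner-set subtlety is the one genuinely nonroutine step in the whole theorem.
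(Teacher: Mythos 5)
Your reductions for parts 1 and 3 are the right ones, and as far as can be judged from this version (the paper defers the proof to its technical report, but its surrounding text --- e.g., Observation~\ref{o:bribers-last} and the way the 3-candidate-Veto proof invokes ``the proof of Theorem~\ref{t:m-to-b}'' to transfer weighted-manipulation hardness --- reveals the intended construction), they coincide with the paper's: non-manipulators go into $V_{<u}$ as already-cast, unbribed votes; the manipulators become $u$ and $V_{>u}$ with bribe limit equal to the coalition size; the goal is encoded by where $p$ sits in $\sigma$ relative to $d$; and for part 3 ``being a manipulator'' is simulated by price $0$ versus a positive price under budget $0$, which is exactly why online manipulation maps to \emph{priced} online bribery. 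Your two correctness directions (bribe everybody in the forward direction, so the universal revelations are overwritten; in the backward direction, run the winning strategy against any fixed adversary and read off the final votes) are the standard argument and are sound, including for the weighted and destructive variants.

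Part 2 is where you put your finger on the one genuine subtlety, and your diagnosis is accurate: with $d$ taken to be the candidate immediately below $p$ in $\sigma$, the destructive goal is $W \subseteq \{p\}$, which is strictly weaker than the unique-winner requirement $W = \{p\}$ exactly when $W = \emptyset$ is achievable --- and this paper explicitly allows election systems with empty winner sets. Your fix (restrict attention to systems whose winner set is always nonempty, which covers every system the theorem is applied to: Plurality, scoring rules, Veto, Approval) is the right one, and the residual fully-general case that you flag cannot in fact be patched: consider an artificial $\cale$ with a polynomial-time winner problem whose winner set is $\{p\}$ when the cast votes encode a Boolean formula together with a satisfying assignment of it, and $\emptyset$ otherwise. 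For that $\cale$, unique-winner constructive manipulation is \np-hard, while online destructive bribery lies in \conp\ (if any bribe is still allowed, one bribe to a garbage vote spoils the encoding and forces $W=\emptyset$, so the instance is a yes; the zero-bribe case is a purely universal question), so a many-one reduction of the former to the latter would yield $\np \subseteq \conp$. Thus part 2, read for arbitrary $\cale$ under the paper's own conventions, implicitly presupposes winner-set nonemptiness; your proposal is as complete as the statement permits, and the caveat you raise is a limitation of the theorem's full generality rather than a hole in your argument.
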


The proof is made available through 
the technical-report version of this paper~\cite{hem-hem-rot:t:online-bribery}.

It is interesting to note that, assuming $\p \neq \np$, 
bribery does not reduce to corresponding online bribery,
not even for natural systems. For example, 
Approval-Bribery is NP-complete~\cite[Theorem 4.2]{fal-hem-hem:j:bribery},
but we will show below in Theorem~\ref{t:approval} that
$\onlinesystemb{Approval}$ (and
even $\onlinesystemwb{Approval}$ and $\onlinesystempb{Approval}$)
are in P.

We end this section with a simple observation about
unpriced, unweighted online bribery.
\begin{observation}\label{o:bribers-last}
For unpriced, unweighted online bribery, it is always optimal to bribe
the last $k$ voters (we don't even have to handle $u$ in a special
way). This implies that unpriced, unweighted online bribery
is certainly reducible to unweighted online
manipulation, and so we inherit those upper bounds.
\end{observation}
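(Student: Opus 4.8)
The plan is to prove the two assertions in order: first that reserving all available bribes for the final voters is optimal, and then that this collapses the problem onto online manipulation. Fix a magnifying-glass moment, write $w_1 = u, w_2, \ldots, w_\ell$ for the current and future voters, and let $k'$ be the number of bribes still available (the budget $k$ minus whatever is already recorded as spent inside $V_{<u}$). Let $T^{*}$ be the strategy ``bribe exactly the last $\min(k',\ell)$ of $w_1,\ldots,w_\ell$.'' Since $T^{*}$ is a legal strategy, its success certifies a yes-instance, so the real content is the converse, which I would prove in contrapositive form: if an adversary can defeat $T^{*}$, then it can defeat every legal briber strategy $T$. Throughout I would use the fact that the winner set depends only on the final collection of cast votes, and that along any play at most $k'$ of the $\ell$ votes end up briber-controlled while the remaining $\ell-k'$ slots hold either $u$'s fixed vote (when $u$ is left unbribed) or an adversarially supplied vote.

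The core is an adversary-simulation argument. A defeat of $T^{*}$ furnishes a collection $S_0$ consisting of the fixed votes of $V_{<u}$, the known vote of $u$, and adversarial votes $a_2,\ldots,a_{\ell-k'}$, such that $S_0$ extended by any further $k'$ votes misses the briber's goal. To defeat an arbitrary $T$, I would have the adversary adaptively reveal these target values one per voter, re-revealing any value on a voter that $T$ elects to bribe (and additionally forcing $u$'s value onto some later slot should $T$ bribe $u$). A counting argument then shows the placement always succeeds: every bribe $T$ spends simultaneously destroys at most one placed value and shrinks the number of votes the briber can ultimately inject, and the bookkeeping balances so that all $\ell-k'$ target values can be driven onto unbribed slots. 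The resulting final collection contains $S_0$ together with exactly $k'$ further votes, so by the defining property of $S_0$ the goal is missed and $T$ is defeated. I expect this adaptive placement to be the main obstacle: it must work against strategies that bribe $u$ or early voters, which we do not control, and it is precisely here that the tempting ``more bad votes, fewer good votes'' monotonicity shortcut fails for arbitrary $\cale$ and must be replaced by reasoning about the final collection.

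With optimality established, a yes-instance is exactly characterized by: for every adversarial completion of $w_2,\ldots,w_{\ell-k'}$ there exist votes for $w_{\ell-k'+1},\ldots,w_\ell$, chosen with knowledge of all earlier votes, that meet the goal. This is an online-manipulation instance in which the last $k'$ voters form the manipulative coalition, the fixed votes of $V_{<u}$ and of the unbribed $u$ are already-cast non-manipulator ballots, and $w_2,\ldots,w_{\ell-k'}$ are the adversarial future non-manipulators. I would turn this into a polynomial-time many-one reduction by emitting the corresponding online-manipulation instance; the one delicate point is to place online manipulation's magnifying-glass moment so that all adversarial voters follow it while the coalition acts only after the earlier votes are revealed. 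Since the construction introduces neither prices nor weights and leaves the winner problem unchanged, unpriced, unweighted online bribery inherits the known upper bounds for unweighted online manipulation.
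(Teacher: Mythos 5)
Your proposal is correct and takes essentially the same route the paper intends: the observation's whole content is that bribing the last $\min(k',\ell)$ remaining voters is optimal---justified, as you do, by comparing the final collections of cast votes that any strategy can be forced into---and that this turns a yes-instance into exactly an online-manipulation question whose coalition is those last voters, whence the upper bounds are inherited through a many-one reduction. The one caveat (shared with the paper's own statement, so not a gap relative to it) is that your ``fact'' that the winner set depends only on the final collection of cast votes is really an anonymity/order-independence assumption on the election system: it holds for all the natural systems of the section in which this observation appears, but not for arbitrary systems in the paper's model---indeed it must fail for the order-sensitive systems behind the paper's $\pspace$-hardness results, since otherwise the observation would place those problems in $\pilevel{2}$.
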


\subsection{Plurality}

In this section, we completely classify the complexity of all
our versions of online bribery for the most important natural 
system, Plurality. In this system, each candidate scores a
point when it is ranked first in a vote and the candidates
with the most points are the winners.  We show that these
problems are NP-complete if we have both prices and weights,
and in P in all other cases. 

In the constructive
case we call all members of $\{c \ | \ c \geq_\sigma d\}$
and in the destructive case  we call
all members of $\{c \ | \ c >_\sigma d\}$
\emph{desired} candidates, where $\sigma$ 
  is the briber's ideal ranking and $d$ the designated candidate.
The following observation is crucial in our upper
bound proofs:
For $\onlinesystempwb{Plurality}$ and
$\onlinesystemdpwb{Plurality}$,
there is a successful bribery if
and only if there is a successful bribery where all
bribed voters
from $u$ onward vote for the same
highest-scoring desired candidate
and all nonbribed voters
after $u$ vote for
the same highest-scoring undesired candidate.
If $u$ is bribed, we
do not count $u$'s original vote to compute the highest score. If
$u$ is not bribed, then we count $u$'s vote.

\begin{theorem}\label{t:plurality-easy}
$\onlinesystemb{Plurality}$,
$\onlinesystemdb{Plurality}$,
$\onlinesystemwb{Plurality}$,
$\onlinesystemdwb{Plurality}$,
$\onlinesystempb{Plurality}$, and
$\onlinesystemdpb{Plurality}$
are in \p.
\end{theorem}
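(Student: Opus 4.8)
The plan is to exploit the structural observation stated just above the theorem, which pins down the shape of an optimal bribery for $\onlinesystempwb{Plurality}$ and $\onlinesystemdpwb{Plurality}$, and hence, by specialization, for all six problems treated here: among all successful briberies there is one in which every bribed voter from $u$ onward is switched to a single highest-scoring \emph{desired} candidate $c^*$, while every unbribed future voter is (in the worst case) assumed to vote for a single highest-scoring \emph{undesired} candidate. The key conceptual step is to argue that this collapses the apparently alternating, online decision problem into a \emph{static} optimization. Indeed, in every one of our models the weights and prices of all voters (including the future ones) are part of the OBS and hence known at the magnifying-glass moment; only the future \emph{votes} are hidden. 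Since the sole beneficial action on any voter is ``bribe to $c^*$'' and the sole harmful revelation is ``vote for the threat candidate,'' and since neither depends on what is actually revealed, the nested ``for all future votes'' quantifiers can be replaced by the single worst case in which every unbribed future voter votes for the threat candidate. Thus the briber may compute an optimal bribery plan offline, and the whole decision reduces to checking whether such a plan exists.

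First I would, for the constructive problems, iterate over each desired candidate $c^* \in \{c \mid c \geq_{\sigma} d\}$ as the intended target (for the destructive problems, over $c^* \in \{c \mid c >_{\sigma} d\}$). For a fixed $c^*$, let $s_0(\cdot)$ denote the scores contributed by the already-cast votes in $V_{<u}$ together with $u$'s known planned vote when $u$ is not bribed. A bribery plan is a choice of a set $S$ of voters among $\{u\}\cup V_{>u}$ to switch to $c^*$; it yields a score of $s_0(c^*) + \sum_{v\in S} w_v$ for $c^*$ (with $u$'s weight already credited when $u$ votes $c^*$ unbribed), while the adversary's best response piles all unbribed future weight $W_{\mathrm{unbribed}}(S)$ onto the undesired candidate with the largest base score, giving a maximal threat of $(\max_{c\ \mathrm{undesired}} s_0(c)) + W_{\mathrm{unbribed}}(S)$. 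The plan wins in the constructive case iff $c^*$'s score is at least this threat, and in the destructive case iff it strictly exceeds it; all of these quantities are polynomial-time computable.

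It then remains to find, for each $c^*$, the plan maximizing the margin between $c^*$'s score and the threat subject to the model's resource constraint, and here each of the three ``easy'' resource regimes admits a simple greedy solution. Bribing a future voter $v$ simultaneously adds $w_v$ to $c^*$'s score and removes $w_v$ from $W_{\mathrm{unbribed}}$, improving the margin by $2w_v$, monotonically. In the weighted, unpriced case ($\onlinesystemwb{Plurality}$, $\onlinesystemdwb{Plurality}$) the constraint bounds the \emph{number} of bribes, so the optimum switches the heaviest available future voters up to the remaining limit; in the unweighted, priced case ($\onlinesystempb{Plurality}$, $\onlinesystemdpb{Plurality}$) every voter has weight $1$, so every bribe improves the margin equally and one simply buys as many bribes as the budget allows by taking the cheapest voters; and in the unweighted, unpriced case ($\onlinesystemb{Plurality}$, $\onlinesystemdb{Plurality}$) all voters are interchangeable, so one bribes up to $k$ of them (equivalently, by Observation~\ref{o:bribers-last}, the last $k$, inheriting the online-manipulation upper bound). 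In each regime the current voter $u$ is special---its weight contributes twice when a \emph{future} voter is bribed but only once for $u$ itself, and leaving $u$ unbribed may add its weight to an undesired candidate---so I would handle $u$ by simply trying both of its two options and subtracting the number of bribes already spent on $V_{<u}$ from the remaining allowance. Accepting exactly when some target $c^*$ admits a winning plan gives a polynomial-time algorithm.

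The main obstacle is the justification in the first paragraph rather than the greedy arithmetic: one must argue rigorously that replacing the nested ``$\forall$ future votes'' by the single concentrated worst case is \emph{sound} (the adversary can never do better than concentrating on one undesired candidate, which is exactly what the observation provides) and \emph{complete} (the briber gains nothing from seeing the revelations, precisely because weights and prices are public and both the optimal bribe target $c^*$ and the worst-case threat are revelation-independent). This is the place where the model's quantifier alternation, emphasized in Section~\ref{sec:online-brib-sequ}, is shown to be vacuous for Plurality; it is also what separates these six cases from the \emph{priced, weighted} case, where choosing which voters to bribe becomes a genuine knapsack-type trade-off between prices and weights and drives the jump to $\np$-completeness.
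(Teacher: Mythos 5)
Your proposal is correct and takes essentially the same route the paper intends: it is built directly on the ``crucial observation'' stated just before the theorem (optimal briberies concentrate all bribed voters on one highest-scoring desired candidate, with the worst case being all unbribed future voters piling onto the highest-scoring undesired candidate, and $u$'s vote counted or not according to whether $u$ is bribed), which collapses the quantifier alternation into a static margin computation solved greedily per resource regime. Your case analysis---heaviest voters under a bribe-count limit, cheapest voters under a budget with unit weights, interchangeable voters in the unpriced unweighted case (matching Observation~\ref{o:bribers-last}), and trying both options for $u$---is exactly the natural completion of that observation, so nothing further is needed.
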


\begin{theorem}
\label{t:plurality-hard}
$\onlinesystempwb{Plurality}$ and
$\onlinesystemdpwb{Plurality}$ are \np-complete,
even when restricted to two candidates.
\end{theorem}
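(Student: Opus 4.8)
The plan is to prove both membership in \np\ (for an arbitrary number of candidates) and \np-hardness (already for two candidates), with the crucial observation stated just above this theorem doing most of the structural work in both directions. Throughout, recall that prices and weights are coded in binary, which is what keeps these problems from being trivially polynomial.

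For the upper bound I would argue that both problems lie in \np. By the observation, on any instance it suffices for the briber to consider strategies of a very restricted shape: every voter that is bribed (from $u$ onward) is bribed into voting for a single desired candidate, and in the worst case all nonbribed voters after $u$ are piled by the adversary onto a single highest-scoring undesired candidate. Hence an \np\ machine can (i)~guess the target desired candidate $c^\ast$ together with the set $S$ of voters (from $u$ onward) to bribe, (ii)~verify that the total price of $S$ is within the budget, and (iii)~verify the goal against the worst-case adversary by ranging in polynomial time over each undesired candidate onto which the adversary might concentrate all nonbribed future voters, checking for each whether $c^\ast$ is still a winner (constructive case) or whether no $\sigma$-hated candidate wins (destructive case), and accepting iff the goal survives every such adversary choice. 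Since the briber's per-voter decisions depend only on membership in the fixed set $S$, this static strategy is realizable online, and by the observation it is optimal; thus both problems are in \np.

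For the lower bound I would reduce from $\partition$: given positive integers $k_1,\dots,k_n$ with $\sum_i k_i = 2K$, decide whether some subset sums to exactly $K$. Restrict to two candidates $a,b$ with briber order $a >_\sigma b$. For $\onlinesystempwb{Plurality}$ set $d=a$ (so $a$ is the unique desired and $b$ the unique undesired candidate), let $u$ be a weight-$0$ voter whose ballot cannot affect the scores, and place $n$ voters after $u$ where voter $i$ has weight $k_i$ and price $k_i$; set the budget to $K$. In the worst case every nonbribed voter after $u$ votes for $b$, so bribing a subset $S$ yields $\mathit{weight}(a)=\sum_{i\in S}k_i$ and $\mathit{weight}(b)=2K-\sum_{i\in S}k_i$. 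The budget forces $\sum_{i\in S}k_i\le K$, while $a$ being a winner (the constructive goal, since with two candidates $a\in W$ iff $\mathit{weight}(a)\ge\mathit{weight}(b)$) forces $\sum_{i\in S}k_i\ge K$; together these give $\sum_{i\in S}k_i=K$, i.e.\ exactly a $\partition$ solution. Using the observation's ``iff'' I would verify both directions: for the forward direction, precommitting to a subset summing to $K$ succeeds against every future revelation, since any voter the adversary reveals as already voting $a$ only helps and never raises the cost above $K$; for the converse, any budget-feasible successful bribery may, by the observation, be taken in the canonical form above, again forcing an exact-$K$ subset. For $\onlinesystemdpwb{Plurality}$ the destructive goal ``$b\notin W$'' requires the \emph{strict} inequality $\mathit{weight}(a)>\mathit{weight}(b)$; to reconcile this with the budget I would add a single weight-$1$ vote for $a$ in $V_{<u}$, so that the goal becomes $1+\sum_{i\in S}k_i > 2K-\sum_{i\in S}k_i$, which by integrality is $\sum_{i\in S}k_i\ge K$, and this combines with the budget bound $\le K$ to again force $\sum_{i\in S}k_i=K$.

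The main obstacle, and the point deserving the most care, is the interaction between the \emph{online, adversarial} semantics and the otherwise static Subset-Sum-style accounting. The reduction works only because the observation collapses the alternating briber/adversary game to a single worst-case configuration (bribed voters on $a$, nonbribed voters on $b$), turning ``there is a winning online strategy'' into the purely combinatorial ``there is a budget-feasible subset of the correct weight.'' I would therefore devote the bulk of the write-up to justifying that this collapse is faithful in both directions, and to the small but genuine mismatch between the non-strict winner condition of the constructive case and the strict condition of the destructive case, which is exactly what the weight-$1$ offset repairs.
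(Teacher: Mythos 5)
Your proposal is correct and takes essentially the same route as the paper: NP membership by using the stated observation to collapse the briber/adversary alternation into guessing a static bribe set (checked against single-candidate concentrations of the nonbribed future weight), and NP-hardness for two candidates via a reduction from Partition with price $=$ weight $= k_i$ and budget $K$, with an extra fixed vote to handle the strict-inequality requirement of the destructive goal. The paper's proof (deferred to the technical-report version) follows this same outline.
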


The proofs of the above two theorems are made available through 
the technical-report version of this paper~\cite{hem-hem-rot:t:online-bribery}.

\subsection{Beyond Plurality}

A scoring rule for $m$ candidates is
a vector  $\alpha = (\alpha_1, \ldots, \alpha_m)$ of integers
$\alpha_1 \geq \alpha_2 \geq \cdots\geq\alpha_m \geq 0$.
This defines an election system on $m$ candidates
where each candidate earns $\alpha_i$ points for each vote
that ranks it in the $i$th position and
the candidates with the most points are the winners. 

\begin{theorem}\label{t:scoring-dichotomy}
For each scoring vector $\alpha = (\alpha_1, \ldots, \alpha_m)$,
\begin{enumerate}
\item
$\onlinesystempwb{\alpha}$
and $\onlinesystemdpwb{\alpha}$ are in \p\ if $\alpha_1 = \alpha_m$ and
$\np$-hard otherwise;
\item
$\onlinesystemwb{\alpha}$ and
$\onlinesystemdwb{\alpha}$ 
are in \p\ if $\alpha_2 = \alpha_m$ and 
$\np$-hard otherwise; and 
\item
$\onlinesystemb{\alpha}, \onlinesystemdb{\alpha},
\onlinesystempb{\alpha}$, and $\onlinesystemdpb{\alpha}$
are in \p.
\end{enumerate}
\end{theorem}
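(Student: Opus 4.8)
The plan is to split the argument according to the shape of the vector $\alpha$ into three regimes, and to reduce the two ``easy'' directions to facts already in hand, leaving one genuinely new tractability argument and a family of largely importable hardness reductions. For the easy directions: first, $\alpha_1 = \alpha_m$ forces $\alpha_1 = \cdots = \alpha_m$, so every candidate receives the same score on every profile, the winner set is always all of $C$, and thus the constructive goal is always met (as $d$ wins) while the destructive goal is never met; hence when $\alpha_1 = \alpha_m$ all eight problems are decided by a trivial test, which gives the $\p$ side of part~1 (and subsumes part of part~2). Second, when $\alpha_2 = \alpha_m$ the vector is $(\alpha_1,\beta,\dots,\beta)$; if $\alpha_1 = \beta$ this is the trivial constant case just handled, and otherwise a candidate $c$ scores $\beta W + (\alpha_1-\beta) f(c)$, where $W$ is the (fixed) total weight and $f(c)$ its number of first-place (weighted) votes, and since $\alpha_1 - \beta > 0$ and $\beta W$ is common to all candidates, $\cale_\alpha$ has exactly the Plurality winner sets on every $m$-candidate profile. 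Each weighted or unweighted unpriced online-bribery problem for such an $\alpha$ is therefore literally the Plurality problem restricted to $m$ candidates, so by Theorem~\ref{t:plurality-easy} it lies in $\p$, giving the $\p$ side of part~2.

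Part~3 (all four unweighted problems in $\p$ for every $\alpha$) is the main new tractability content. The key is a uniform-strategy minimax lemma generalizing the Plurality ``crucial observation'': in the constructive case there is a successful bribery iff for some target desired candidate $c^*$ there is one in which every bribed voter from $u$ onward casts the single vote ranking $c^*$ first and all undesired candidates last, while (in the worst case) every unbribed later voter casts the single vote ranking the adversary's chosen undesired champion $c^\dagger$ first and $c^*$ last. Concentration is optimal for both sides, since splitting only dilutes the relevant maximum, and the destructive case is the mirror image with ``desired'' replaced by ``strictly preferred to $d$'' and strict inequalities. Because $m$ is a constant there are only $O(1)$ choices of $(c^*,c^\dagger)$; for each, since every voter has weight $1$, each bribe contributes the identical score differential, so meeting the goal depends only on the number $t$ of bribes through a monotone threshold $t^*(c^*)$. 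The online alternation then collapses, as the unbribed votes no longer matter beyond their count: for the unpriced case the briber checks whether $t^*(c^*)$ bribes are available within the limit (this also follows from Observation~\ref{o:bribers-last} plus tractable online manipulation), and for the priced case whether the $t^*(c^*)$ cheapest still-available voters fit the budget. All of this is polynomial, and the current decision about $u$ is read off from whether $u$ lies in that cheapest block.

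For the hardness directions only $\np$-hardness is required. For weighted unpriced bribery with $\alpha_2 > \alpha_m$ (part~2) I would invoke the known dichotomy for weighted coalitional manipulation of scoring rules ($\np$-complete precisely when $\alpha_2 > \alpha_m$) and lift it via Theorem~\ref{t:m-to-b}: part~(1) sends $\systemwcm{\alpha}$ to $\onlinesystemwb{\alpha}$, and for the destructive variant $\onlinesystemdwb{\alpha}$ I would lift \emph{constructive} unique-winner weighted manipulation through Theorem~\ref{t:m-to-b}(2) (safer than relying on destructive manipulation, which can be easier). If one prefers a self-contained argument, the standard $\partition$ reduction works, the gap $\alpha_2 > \alpha_m$ being exactly what lets a weight-$w$ vote shift a fixed differential proportional to $\alpha_2-\alpha_m$ between two threat candidates, so that manipulation succeeds iff the weights split evenly. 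For priced weighted bribery with $\alpha_1 > \alpha_m$ (part~1): when $\alpha_2 > \alpha_m$ the hardness is inherited from part~2 through $\onlinesystemwb{\alpha} \manyone \onlinesystempwb{\alpha}$ (and its destructive analog) from Proposition~\ref{p:reductions}; in the remaining subcase $\alpha_2 = \alpha_m < \alpha_1$ the system is Plurality-like, so $\onlinesystempwb{\alpha}$ and $\onlinesystemdpwb{\alpha}$ inherit the $\np$-completeness of priced weighted Plurality bribery from Theorem~\ref{t:plurality-hard} (which holds even for two candidates and embeds into $m$ candidates by padding the extras to always lose).

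The main obstacle I expect is the uniform-strategy lemma underlying part~3: one must argue rigorously that concentrating both the briber's and the adversary's votes is optimal for \emph{general} scoring rules, not just Plurality, and that this collapses an unbounded alternation of existential bribe-choices and universal vote-revelations to a single count threshold, all while simultaneously controlling every undesired (respectively, every strictly-hated) candidate rather than a single one. The same lemma also explains the dichotomy's boundaries: in the weighted case a bribe contributes weight $w_i$ rather than a unit, so reaching a weighted threshold at minimum price becomes a knapsack/$\partition$ problem (hardness from prices alone whenever $\alpha_1 > \alpha_m$), and when $\alpha_2 > \alpha_m$ even unpriced weighted votes acquire the two-way split that drives the manipulation reduction, whereas in the unweighted or Plurality-like settings no such subset-sum structure can arise.
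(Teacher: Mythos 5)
Your decomposition and the easy directions are mostly sound: when $\alpha_1=\alpha_m$ every profile makes all candidates winners, so constructive instances are always ``yes'' and destructive ones always ``no''; when $\alpha_2=\alpha_m$ the winner sets coincide with Plurality's on every $m$-candidate profile, so Theorem~\ref{t:plurality-easy} applies; and the hardness directions---lifting the weighted-manipulation dichotomy through Theorem~\ref{t:m-to-b}, inheriting the priced-weighted cases through Proposition~\ref{p:reductions} and Theorem~\ref{t:plurality-hard}---follow the right inheritance pattern, though your one-line ``padding'' of two-candidate Plurality instances into $m$ candidates needs care in an online setting (e.g., heavy already-cast votes pinning the padding candidates down, so the adversary cannot promote them).

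The genuine gap is in part~3, exactly where you predicted trouble: the uniform-strategy (``concentration'') lemma is \emph{false} for general scoring rules. Concretely, take Veto, $\alpha=(1,1,0)$, briber order $a>_\sigma b>_\sigma c$, $d=a$, already-cast votes giving scores $a\colon 0$, $b\colon 10$, $c\colon 10$, and $20$ bribable voters remaining with no unbribed voters after $u$. If all bribed voters cast one identical vote ranking $a$ first, the scores become $(a,b,c)=(20,30,10)$ or $(20,10,30)$ and $a$ loses; splitting the bribes ten-and-ten between $a>b>c$ and $a>c>b$ gives $(20,20,20)$ and $a$ wins. So splitting, not concentration, is optimal for the briber, and a symmetric argument shows the adversary also profits from splitting (it lowers the maximum over desired candidates by distributing them across the bottom positions). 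The paper's own proof of its 3-candidate-Veto theorem exhibits both failure modes: there the briber must ``bribe to minimize the maximum score of $b$ and $c$,'' and the adversary's optimal play is literally a $\partition$ instance---which is precisely the source of the $\p^{\np[1]}$-hardness in the weighted case. Hence your collapse of the online game to a count threshold $t^*(c^*)$ is unsound, and part~3 is unproven as written. A repair that keeps the spirit of your argument: since $m$ and $\alpha$ are fixed and all weights equal $1$, the score vector ranges over only $O((n\alpha_1+1)^m)$, i.e., polynomially many, values; evaluate the alternating game by backward induction over pairs (voter index, score vector), computing for each state the minimum number of bribes (unpriced case) or the minimum total price (priced case) that the briber needs to force the goal from that state---the worst-case revealed vote and the best bribe interchange cleanly because the bribe option's value does not depend on the revealed vote. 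Finally, your parenthetical that the unpriced unweighted case ``follows from Observation~\ref{o:bribers-last} plus tractable online manipulation'' presupposes a \p{} result for unweighted online manipulation of scoring rules that is established nowhere in this paper, so it cannot be used as a shortcut.
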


The proof is made available through 
the technical-report version of this paper~\cite{hem-hem-rot:t:online-bribery}.

In Veto, each candidate scores a point when it is not
ranked last in a vote and the candidates
with the most points are the winners.
Now let's look at 3-candidate-Veto.

\begin{theorem}
  \begin{enumerate}
\item
$\onlinesystemb{\mbox{\rm 3-candidate-Veto}}$,
$\onlinesystemdb{\mbox{\rm 3-candidate-Veto}}$,
$\onlinesystempb{\mbox{\rm 3-candidate-Veto}}$, and
$\onlinesystemdpb{\mbox{\rm 3-candidate-Veto}}$ are in \p.
\item
$\onlinesystemwb{\mbox{\rm 3-candidate-Veto}}$ and
$\onlinesystemdwb{\mbox{\rm 3-candidate-Veto}}$ are
$\p^{\np[1]}$-complete.
\item
$\onlinesystempwb{\mbox{\rm 3-candidate-Veto}}$
and
$\onlinesystemdpwb{\mbox{\rm 3-candidate-Veto}}$
are $\p^{\np[1]}$-hard and
in $\deltatwo$ (and we conjecture that they are
$\deltatwo$-complete).
\end{enumerate}
\end{theorem}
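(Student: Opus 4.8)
The plan is to fix, once and for all, the three candidates as $\{a,b,c\}$ and to exploit that in Veto a vote simply vetoes one candidate, so each (bribed or unbribed) voter contributes its weight to exactly one veto-total $V_a,V_b,V_c$, and the score of a candidate is the total weight minus its veto-total; hence the winners are exactly the candidates of minimum weighted veto-total. Since the briber's ranking $\sigma$ and the designated candidate $d$ are part of the input, the partition of $\{a,b,c\}$ into desired and undesired candidates is determined, leaving only a constant number of cases (one or two desired candidates, constructive or destructive). In each case I would first prove the analogue of the Plurality ``highest-scoring'' observation: on every bribed voter the briber should veto an undesired candidate it wants to weigh down (equivalently, never veto a desired candidate), and the adversary should, on every unbribed future voter, veto a desired candidate. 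This collapses a bribe to a single choice (which undesired candidate to hit) and the adversary's move to a single choice (which desired candidate to hit), reducing everything to bookkeeping on $V_a,V_b,V_c$.

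For Part~1 I would treat the unpriced cases by Observation~\ref{o:bribers-last}: bribing the last $k$ voters is optimal, so these problems reduce to unweighted online manipulation for $3$-candidate-Veto, which has a trivial winner problem and only three candidates and is therefore in \p. For the priced unweighted cases I would give a direct algorithm. Because voters are unweighted, each veto changes a total by exactly $1$, so in the one-desired-candidate case the adversary is forced (it always vetoes the lone desired candidate) and the briber merely needs enough bribes; the minimum number $m^\ast$ of bribes that guarantees the goal is given by a closed-form threshold, and the cheapest way to realize it is to bribe the $m^\ast$ cheapest voters, checkable against the budget in polynomial time. In the two-desired-candidate case the adversary splits the unbribed voters between the two desired candidates, but since everything is integer-valued the adversary succeeds exactly when a short interval (whose endpoints depend on the number of bribes and the already-fixed totals) contains an integer; I would show by an exchange argument that a non-adaptive ``cheapest-$m$'' briber strategy is optimal, so the briber wins iff some $m$ with cheapest-$m$ cost within budget kills that interval, again a polynomial-time test.

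Part~2 is where the subset-sum structure surfaces. Fixing the (unpriced, hence count-limited) budget at $k$ bribes, I would first argue that bribing is monotone---each additional bribe moves weight off the adversary's pool and onto the candidate the briber wants heavy---so the briber's optimal bribe set is deterministically the $k$ heaviest voters among $u,u_2,\dots,u_\ell$, and, crucially, this set is independent of the directions the adversary reveals, so the online adaptivity is vacuous for the briber. Once the briber's set is fixed, the adversary's remaining freedom is exactly to split the unbribed weight between the two desired candidates, i.e.\ a single subset-sum question, which lies in \np. Thus the whole problem is: deterministically form the heaviest-$k$ set, then make \emph{one} \np\ query (``can the adversary split so as to push both desired candidates above the undesired one?''; in the one-desired case the single query is instead the briber's own subset-sum) and accept according to its answer. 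This yields membership in $\p^{\np[1]}$. For the matching lower bound I would reduce from a canonical $\p^{\np[1]}$-complete threshold/comparison variant of $\partition$, engineering weights and fixed past votes so that the briber's deterministic choices implement the polynomial-time part of the one-query computation while the adversary's forced subset-sum split realizes the single \np\ query, with the briber's goal set to mirror the machine's acceptance. \textbf{This lower-bound construction is the main obstacle:} manipulation alone (which reduces into our problem by Theorem~\ref{t:m-to-b}) is only \np-hard, so the genuine one-adaptive-query power must be wrung out of the interaction between the briber's budgeted existential choices and the adversary's universal split, and getting that interaction to encode an arbitrary $\p^{\np[1]}$ computation with only three candidates is delicate.

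For Part~3 the lower bound is immediate: weighted online bribery many-one reduces to priced-weighted online bribery by Proposition~\ref{p:reductions}, so $\p^{\np[1]}$-hardness is inherited from Part~2. For the upper bound, with both prices and weights the briber can no longer simply take the heaviest voters---the price budget makes the choice of \emph{which} voters to bribe a genuine subset-selection problem entangled with the adversary's subset-sum response. I would therefore use an \np\ oracle to binary-search: guess-and-verify, for successively refined parameters (the total bribed weight, and the induced threshold the adversary must beat), whether a within-budget bribe set exists that defeats every adversary split. Each such feasibility check is an \np\ question, and polynomially many of them suffice, placing the problem in $\deltatwo=\p^{\np}$; pinning down whether one query (or some fixed number) suffices is exactly what separates this case from Part~2, which is why $\deltatwo$-completeness is only conjectured.
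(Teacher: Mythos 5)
The central gap is in your Part~2 lower bound. You plan a single, monolithic reduction in which ``the briber's deterministic choices implement the polynomial-time part'' and ``the adversary's forced subset-sum split realizes the single \np\ query.'' But once the position of $d$ in $\sigma$ is fixed, your own case analysis shows the game degenerates: in the one-desired-candidate case the adversary's vetoes are forced and only existential choices remain, so those instances form a set in \np; in the two-desired-candidate case the briber's actions are forced (heaviest-$k$, all vetoing the undesired candidate) and only the adversary's split remains, so those instances form a set in \conp. Hence any reduction whose image lies within a single case would place $\p^{\np[1]}$ inside \np\ or inside \conp, i.e., force $\np=\conp$ --- so the ``delicate'' single construction you are seeking cannot exist. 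The paper's proof sidesteps this entirely: it observes that the weighted problem \emph{is} the union of an \np-complete part (the one-desired case, whose \np-hardness is simply inherited from weighted manipulation via Theorem~\ref{t:m-to-b}) and a \conp-complete part (the two-desired case, whose complement is essentially $\partition$), and that these two parts are P-separable because which case applies is read off syntactically from $d$'s position in $\sigma$. The union of P-separable \np-complete and \conp-complete sets is automatically $\p^{\np[1]}$-complete (in the sense of Grollmann--Selman separability), so the hardness you flagged as ``the main obstacle'' requires no engineered interaction at all; it falls out of viewing the problem as this two-pronged union.

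There is a second, smaller gap in your Part~3 upper bound. The feasibility checks you propose to binary-search over --- ``whether a within-budget bribe set exists that defeats every adversary split'' --- are $\exists\forall$ questions, i.e., $\sigmatwo$ queries, not \np\ queries, so the procedure as described does not yield membership in $\deltatwo$. Also, aggregate parameters (total bribed weight, an induced threshold) do not determine the adversary's subset-sum instance, which depends on the actual multiset of unbribed weights. The paper's route is to argue that a maximum-weight within-budget bribe set is optimal, to construct such a set explicitly, voter by voter, using polynomially many \np\ queries (first query the largest attainable bribed weight, then whether that weight is attainable while bribing $u$, and so on), and only then to ask one final \np\ query about whether the adversary can defeat that \emph{specific} set. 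Exhibiting a concrete optimal bribe set before the single universal check is the decoupling step your sketch is missing.
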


\begin{proof}
We look at different cases
for the placement of the designated candidate in the preference order
$a >_\sigma b >_\sigma c$.
(Note that the polynomial-time cases also follow from 
Theorem~\ref{t:scoring-dichotomy}.)

\begin{enumerate}
\item
If $d = c$ in the constructive case
or $d = a$ in the destructive case, the problem is trivial.
\item
If $d = a$ in the constructive case, we 
need to ensure that $a$ is a winner, and
if $d = b$ in the destructive case, 
we need to ensure that $a$ is the unique winner.
In both these cases, the nonbribed voters veto $a$, no matter what.
This means that the location of the bribed voters doesn't matter (though
their prices and weights will). In the priced case,
bribe the
cheapest voters (as many as possible within the budget)
and bribe to minimize the maximum score of $b$ and $c$.  So that's in P.
The priced and weighted (and so also the weighted) case is in NP:
Guess a set of voters to bribe, check that they are within the budget, 
guess votes for the bribed voters, and
have all nonbribed voters veto $a$. NP-hardness for the weighted case
follows from the NP-hardness for weighted manipulation plus
the proof of Theorem~\ref{t:m-to-b}.
\item
If $d = b$ in the constructive case, the goal is to not have $c$
win uniquely, and if $d = c$ in the destructive case,
the goal is to have $c$ not win.
In this case, all bribed voters veto $c$ (no matter what).
The same argument as above shows that priced bribery is in P.

The unpriced, weighted cases are coNP-complete. To show that the
complement is in NP, 
pick the $k$ heaviest voters to bribe. Then check if you can
partition the remaining voters to veto $a$ or $b$ in such a way
that $c$ wins uniquely in the constructive case or
that $c$ wins in the destructive case.
Note that it is always best for the
briber to bribe the $k$ heaviest voters: Swapping the weights of 
a lighter voter to be bribed with a heavier voter not to be bribed
will never make things worse for the briber.
To show hardness, note that the complement is
basically
(the standard NP-complete problem)~Partition.

To show the upper bound for the priced, weighted case, note that we need
to check that there exists a set of voters that can be bribed within
the budget such that if all bribed voters veto $c$, then for all
votes for the nonbribed voters, $c$ is not the unique winner (in the
constructive case) or not a winner (in the destructive case).
This is clearly in $\sigmatwo$. With some care, we can show that it is
in fact in $\deltatwo$. 
First use an NP oracle to determine the largest possible
total weight (within the budget) of bribed voters.
Then determine whether we should bribe $u$, by using the oracle
again to determine the largest possible total weight (within the budget)
of bribed voters, assuming
we bribe $u$. If that weight is the same as the previous weight,
bribe $u$. Otherwise, do not bribe $u$. Repeating this will give us a
set of voters to bribe of maximum weight. All these voters will 
veto $c$. It remains to check that for all votes for the nonbribed voters,
$c$ is not the unique winner (in the
constructive case) or not a winner (in the destructive case).
This takes one more query to an NP oracle.
\end{enumerate}

Putting all cases together, 
the priced case (and so also the unpriced, unweighted case) is in P.
The weighted case is $\p^{\np[1]}$-complete,
since it can be written as the union of a NP-complete set and
a coNP-complete set that are P-separable.
(Two sets $A$ and $B$
are P-separable~\cite{gro-sel:j:complexity-measures} if there exists
a set $X$ computable in polynomial time
  such that $A \subseteq X \subseteq \overline{B}$.)

The priced, weighted case inherits the  $\p^{\np[1]}$-hardness
from the weighted case (in fact, it already inherited this from
online manipulation, using Theorem~\ref{t:m-to-b}), and it is in
$\deltatwo$.~\end{proof}

We end this section by looking at
approval voting.
In approval voting, each candidate scores a point when it is approved
in a vote and the candidates
with the most points are the winners.
Though Approval-Bribery is
NP-complete~\cite[Theorem 4.2]{fal-hem-hem:j:bribery},
we show that $\onlinesystemb{Approval}$ (and
even $\onlinesystemwb{Approval}$ and $\onlinesystempb{Approval}$)
are in P. This implies that even for natural systems, 
bribery can be harder than online bribery (assuming $\p \neq \np$).

\begin{theorem}\label{t:approval}
  \begin{enumerate}
\item
$\onlinesystemb{Approval}$,
$\onlinesystemdb{Approval}$,
$\onlinesystempb{Approval}$,
$\onlinesystemdpb{Approval}$,
$\onlinesystemwb{Approval}$, and 
$\onlinesystemdwb{Approval}$
are each in $\p$.
\item
$\onlinesystempwb{Approval}$ and
$\onlinesystemdpwb{Approval}$ are
each $\np$-complete.
\end{enumerate}
\end{theorem}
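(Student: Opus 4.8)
The plan is to reduce every one of these online bribery problems to a simple static optimization by exploiting the structure of Approval. The key structural claim (the analog of the observation preceding Theorem~\ref{t:plurality-easy}) is this: in the constructive case there is a successful online bribery if and only if there is a successful one in which every bribed voter (from $u$ onward) approves exactly the desired candidates and disapproves all others, while in the worst case every non-bribed future voter approves exactly the undesired candidates. This ``approve-all-desired, disapprove-the-rest'' bribe dominates any other vote the briber could install, since it weakly raises every desired candidate's score and weakly lowers every undesired candidate's score; symmetrically, ``approve-all-undesired'' is the adversary's dominant revealed vote. (The destructive case is identical after swapping the roles of the candidates the briber may let win versus those it must keep out, and replacing the final comparison by the strict one that the destructive goal requires.) I expect the main obstacle to be verifying this domination cleanly together with the claim that, because both the bribe content and the worst-case revealed content are fixed, the online/adaptive nature of the game collapses: against the adversary who always reveals ``approve-all-undesired,'' no adaptive briber strategy can do better than committing in advance to a fixed set of voters to bribe, and conversely such a commitment is robust against every adversary. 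This is the step where one must argue carefully that timing and adaptivity buy nothing.

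Granting that reduction, the worst-case final tally depends only on the total weight (or, in the unweighted cases, the number) of bribed voters: each desired candidate ends with its base score from $V_{<u}$ plus the total weight $B$ of bribed voters, and each undesired candidate ends with its base score plus the weight $W_{\mathrm{total}}-B$ of the non-bribed future voters. Hence the briber succeeds exactly when $B$ meets a threshold computable in polynomial time from the base scores and $W_{\mathrm{total}}$. First I would dispatch the six polynomial-time cases by maximizing $B$ greedily within the relevant resource bound: for the unpriced, unweighted problems one bribes as many of the last voters as the limit $k$ allows (this is exactly Observation~\ref{o:bribers-last}); for the unpriced, weighted problems one bribes the $k$ heaviest future voters; and for the priced, unweighted problems one bribes the cheapest voters until the budget is exhausted. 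In each case the maximal achievable $B$ is found in polynomial time and the threshold test is then immediate, so $\onlinesystemb{Approval}$, $\onlinesystemdb{Approval}$, $\onlinesystempb{Approval}$, $\onlinesystemdpb{Approval}$, $\onlinesystemwb{Approval}$, and $\onlinesystemdwb{Approval}$ are all in $\p$ (modulo trivial edge cases, e.g., no desired candidate existing).

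For part~2, the priced \emph{and} weighted cases, maximizing $B$ subject to a \emph{budget on total price} is no longer greedy but a knapsack-type problem. Membership in $\np$ is easy: guess the set of voters to bribe, check that its total price is within budget, and verify that its total weight meets the threshold; the alternating quantifiers collapse because the worst-case tally is fixed once the bribe set is chosen. For $\np$-hardness I would give a direct reduction from Subset Sum on two candidates with $a >_\sigma b$. Given integers $s_1,\dots,s_n$ and target $t$, create one future voter per item with both price and weight equal to $s_i$, set the budget to $t$, and install fixed votes in $V_{<u}$ so that the success threshold is exactly $B \geq t$. Since price and weight coincide, a feasible bribe set must have total price at most $t$ and total weight at least $t$, hence total exactly $t$, so the briber succeeds iff the Subset Sum instance has a solution. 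The destructive version follows from the same construction with $d=b$ and the off-by-one adjustment forced by the strict destructive winning condition. (One could instead inherit $\np$-hardness through Theorem~\ref{t:m-to-b} from weighted manipulation, but the direct reduction is cleaner and yields the two-candidate bound.) This gives $\np$-completeness of $\onlinesystempwb{Approval}$ and $\onlinesystemdpwb{Approval}$.
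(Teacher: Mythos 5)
Your proposal is correct and follows essentially the same route as the paper: the Approval analog of the paper's Plurality observation (bribed voters approve exactly the desired candidates, worst-case revealed voters approve exactly the undesired ones) collapses the adaptive game to a static maximization of the bribed weight $B$, after which greedy selection settles the six easy cases and a knapsack/subset-sum argument settles $\np$-completeness of the priced, weighted cases.

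Two caveats. First, your worst-case tally formula treats $u$ like a future voter; as in the paper's Plurality observation, $u$'s planned vote is known and must be counted as cast whenever $u$ is not bribed, so the greedy step should branch on whether to bribe $u$ and apply ``bribe the $k$ heaviest'' (or cheapest, etc.)\ only to the voters after $u$ in each branch. That is a cosmetic fix, not a gap. Second, your parenthetical fallback---inheriting $\np$-hardness via Theorem~\ref{t:m-to-b} from weighted manipulation---is unavailable for Approval: $\systemwcm{Approval}$ and $\onlinesystemwcm{Approval}$ are both in $\p$ (the manipulators' dominant vote is exactly your approve-all-desired vote), so there is no hardness to inherit. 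That inheritance is what the paper uses for Veto, where weighted manipulation is $\np$-hard; its failure for Approval is precisely why a direct reduction such as your Subset Sum construction is needed here. Since you rely on the direct reduction as the primary argument, the proof stands.
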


The proof is made available through 
the technical-report version of this paper~\cite{hem-hem-rot:t:online-bribery}.

\section*{Acknowledgments}
We are grateful to the anonymous TARK 2019 referees for helpful comments.
\bibliographystyle{eptcsalpha}
%
%
% \bibliography{gry-portable-doi-tark}
\newcommand{\etalchar}[1]{$^{#1}$}

\end{document}